\newcommand{\agents}{A}
\newcommand{\outc}{R}
\newcommand{\outrel}{\delta}
\newcommand{\vx}{\vec{x}}
\newcommand{\next}{{\cal X}}
\newcommand{\trf}[1]{\langle #1 \rangle}
\newcommand{\N}{{\cal N}}
\newcommand{\F}{{\cal F}}
\renewcommand{\L}{{\cal L}}
\newcommand{\Ta}{{\it Ta}}
\newcommand{\Out}{{\it Out}}
\newcommand{\jd}[1]{\textcolor{black}{#1}}
\newcommand{\jde}[1]{\textcolor{black}{#1}}
\newcommand{\je}[1]{\textcolor{black}{#1}}
\def\by#1{\mathop{{\hbox{\setbox0=\hbox{$\scriptstyle{#1\quad}$}{$\buildrel{\quad\scriptstyle{#1}\quad}\over{\hbox to \wd0{\rightarrowfill}}$}}}}}
\def\bytwo#1#2{%
\setbox0=\hbox{$\scriptstyle{#1\quad}$}%
\mathrel{\mathop{\hbox to \wd0{\rightarrowfill}}\limits^{#1}_{#2}}%
}
\newcounter{zeile}
\newbox\kasten
\let\graph=\par
\gdef	\while#1{[while] $#1$ [do]}
\gdef\algo{\catcode`\~=\active
	\catcode`\[=\active\catcode`\>=\active \setcounter{zeile}{0}
	\def\par{ \refstepcounter{zeile}
                 \graph\noindent\kern\wd\kasten%
		  \llap{{\small\thezeile}}
                \quad}
	\def[##1]{{\bf##1}}\def~##1~{\mathchar"405B##1\mathchar"505D}
	\def>{\quad}\obeylines}}
\begin{document}

\pagestyle{empty}  

\author{Javier Esparza\inst{1} \and J\"org Desel\inst{2}} 

\title{On Negotiation as Concurrency Primitive II: \\Deterministic Cyclic Negotiations}
\institute{Fakult\"at f\"{u}r Informatik, Technische Universit\"{a}t M\"{u}nchen, Germany \and
Fakult\"at f\"ur Mathematik und Informatik, FernUniversit\"at in Hagen, Germany}
\maketitle

\setlength{\belowcaptionskip}{-2pt}

\pagestyle{headings}  

\begin{abstract}
We continue our study of negotations, a concurrency model with multiparty negotiation as primitive. In a previous paper \cite{negI} we have provided a correct and complete set of reduction rules for sound, acyclic, and (weakly) deterministic negotiations. In this paper we extend this result to all deterministic negotiations, including cyclic ones. We also show that this set of rules allows one to decide soundness and to summarize negotiations in polynomial time.
\end{abstract}

\section*{Retraction}

{\em Unfortunately, while preparing a journal version of this contribution we have discovered that Lemma \ref{lem:soundsubneg} is wrong. Since the lemma is used in line 3 of the reduction procedure described in Section \ref{subsec:redproc}, the procedure is also incorrect.

We have uploaded a preprint of the journal version to arXiv
(\url{http://arxiv.org/abs/1612.07912}). The preprint 
contains a counterexample to Lemma \ref{lem:soundsubneg} (see Section 7.7). It also presents a corrected reduction procedure with the same characteristics as the incorrect one. So, fortunately, while the specific reduction procedure presented in this contribution is wrong, our main results still hold:
\begin{itemize}
\item The merge, iteration, and shortcut reduction rules are complete for the class of sound and deterministic negotiations, i.e., they completely reduce all and only the sound deterministic negotiations to an atomic negotiation.
\item There exists a polynomial $p(x)$ such that every sound deterministic negotiation of size $n$ can be reduced to an atomic negotiation by means of at most $p(n)$ applications of the reduction rules.
\item There exists a polynomial-time algorithm that, given a sound deterministic negotiation of size $n$, constructs a reduction sequence of length at most $p(n)$.
\end{itemize}
}

\section{Introduction}

Negotiation has long been identified as a paradigm for process interaction 
\cite{davis1983negotiation}. It has been applied to different problems
(see e.g. \cite{winsborough2000automated,atdelzater2000qos}), and
studied on its own \cite{jennings2001automated}. However, it
has not yet been studied from a concurrency-theoretic point 
of view. In \cite{negI} we have initiated a study 
of negotiation as communication primitive. 

Observationally, a negotiation is an interaction in which several partners come 
together to agree on one out of a number of possible outcomes
(a synchronized nondeterministic choice). 
In \cite{negI} we have introduced {\em negotiations}, a Petri-net like concurrency model 
combining multiparty ``atomic'' negotiations 
or {\em atoms} into more complex {\em distributed negotiations}. 
Each possible outcome of an atom has associated a state-transformer.
Negotiation partners enter the atom in certain initial states, and leave
it in the states obtained by applying to the initial states
the state-transformer of the outcome agreed upon. Atoms are combined into 
more complex, distributed negotiations,
by means of a next-atoms function that determines for each atom, 
negotiating agent, and outcome, the set of atoms the agent is ready 
to engage in next if the atom ends with that outcome.

Negotiations are close to a colored version of van der {\jd Aalst's} 
{\em workflow nets} \cite{aalst}. Like in workflow nets, distributed negotiations can 
be {\em unsound} because of deadlocks or livelocks. The {\em soundness} 
problem consists of deciding if
a given negotiation is sound. Moreover, a sound negotiation
is equivalent to a single atom whose {\jd state transformation function} 
determines the possible final internal states of all parties
as a function of their initial internal states.  The {\em summarization problem} 
consists of computing such an atomic negotiation, called a {\em summary}. 

Negotiations can simulate 1-safe Petri nets (see 
the arXiv version of \cite{negI}), which proves that the soundness problem 
and (a decision version of) the summarization problem are, unsurprisingly, 
PSPACE-complete. For this reason we have studied in \cite{negI}
two natural classes: {\em deterministic} and {\em weakly deterministic} negotiations. 
Only deterministic negotiations are relevant for this paper.
Loosely speaking, a negotiation is deterministic if, for each agent and each outcome of
an atomic negotiation, the next-atom function yields only one next atom, i.e., each 
agent can always engage in one atom only. 

In particular, we have shown in \cite{negI} that the soundness and summarization problems 
for {\em acyclic} deterministic negotiations can be solved in polynomial time.
(Notice that the state space of a deterministic negotiation can be exponentially larger
then the negotiation itself). 
The algorithm takes the graphical representation of a reduction procedure in which 
the original negotiation is progressively reduced to a simpler one by means 
of a set of reduction rules. Each rule preserves soundness and summaries (i.e., the negotiation 
before the application of the rule is sound if{}f the negotiation after the 
application is sound, and both have the same summary). Reduction rules have been extensively applied to 
Petri nets or workflow nets, but most of this work has been devoted to the 
liveness or soundness problems \cite{DBLP:conf/ac/Berthelot86,DBLP:conf/apn/Haddad88,DBLP:journals/ppl/HaddadP06,DBLP:journals/tcs/GenrichT84,Desel:1995:FCP:207572}, and many rules, 
like for example the linear dependency rule 
of \cite{Desel:1995:FCP:207572}, do not preserve summaries.

In \cite{negI} we conjectured that the addition of a simple rule allowing one
to reduce trivial cycles yields a complete set of rules for all sound 
deterministic negotiations. 
In this paper we prove this result, and show that the number of 
rule applications required to summarize a negotiation is still polynomial.

While the new rule is very simple, the proof of our result is very involved. 
It is structured in 
several sections, and some technical proofs have been moved to an appendix.
More precisely, the paper is structured as follows. Sections \ref{sec:synsem}
and \ref{sec:determ} presents the main definitions of \cite{negI} in compact form. 
Section \ref{sec:redrules} introduces our set of three reduction rules.
Section \ref{sec:compl} proves that the rules summarize all sound
deterministic negotiations. Section \ref{sec:complexity} proves that the 
summarization requires a polynomial number of steps.

\section{Negotiations: Syntax and Semantics}
\label{sec:synsem}

We recall the main definitions of \cite{negI}, and refer to this paper for more 
details.

We fix a finite set $\agents$ of {\em agents}. 
Each agent $a \in \agents$ has a (possibly infinite) nonempty set $Q_a$ of {\em internal states}. 
We denote by $Q_\agents$ the cartesian product $\prod_{a \in \agents} Q_a$. 
{\jd A {\em transformer} is a left-total relation $\tau \subseteq Q_\agents \times Q_\agents$}. 
{\jd Given $S \subseteq \agents$,
we say that a transformer $\tau$ is an {\em $S$-transformer} if, for each $a_i \notin S$,  
$\left((q_{a_1} , \ldots , q_{a_i}, \ldots , q_{a_{|A|}}), (q'_{a_1} , \ldots ,q'_{a_i}, \ldots ,  q'_{a_{|A|}})\right)\in\tau$ implies $q_{a_i} = q'_{a_i}$.}
So an $S$-transformer  only transforms the internal
states of  agents in $S$. 

\begin{definition}
A {\em negotiation atom}, or just an {\em atom}, is a triple $n=(P_n, \outc_n,\outrel_n)$,
 where $P_n \subseteq \agents$ is a nonempty set of {\em parties}, 
$\outc_n$ is a finite, nonempty set of {\em outcomes}, {\jd and 
 $\outrel_n$  is a mapping assigning to each outcome $r$} in $\outc_n$ 
a $P_n$-transformer $\outrel_n (r)$.
\end{definition}

\noindent Intuitively, if the  
states of the agents before a negotiation $n$ are given by a tuple $q$ 
and the outcome of the negotiation is $r$, then the agents  change
their  states to $q'$ for some $(q,q') \in \delta_n (r)$. 

For a simple example, consider a negotiation atom $n_\texttt{FD}$ with 
parties \texttt{F} (Father) and \texttt{D} (teenage Daughter). The goal of the
negotiation is to determine whether \texttt{D} can go to a party, and the time 
at which she must return home. The possible outcomes are
$\texttt{yes}$ ($\texttt{y}$) and $\texttt{no}$. 
Both sets $Q_{\texttt{F}}$ and $Q_{\texttt{D}}$ contain a state 
$\bot$ plus a state $t$ for every time $T_1 \leq t \leq T_2$ in a 
given interval $[T_1,T_2]$. {\jd Initially, \texttt{F} is in state $t_f$ and \texttt{D} in state $t_d$.} 
The  transformer $\delta_{n_\texttt{FD}}$ is given by
$$
\begin{array}{rcl}
\delta_{n_{fd}} (\texttt{yes})  & = & \left\{ \left( (t_f, t_d)  , (t,t)\right) \; \mid \;  t_f \leq t \leq t_d \vee t_d \leq t \leq t_f \right\} \\
\delta_{n_{fd}} (\texttt{no}) & = &  \left\{\left((t_f, t_d) ,  (\bot , \bot)\right) \; \right\}  
\end{array} 
$$


\subsection{Combining atomic negotiations}

A negotiation is a composition of atoms.  We add a {\em transition function} $\next$ that assigns to every 
triple $(n,a,r)$ consisting of an atom $n$, a participant $a$ of $n$, and an outcome $r$ of $n$ a set 
$\next(n,a,r)$ of atoms. Intuitively, this is the set of atomic negotiations 
agent $a$ is ready to engage in after the atom $n$,  if the outcome of $n$ is $r$. 

\begin{definition}
Given a {\jd finite} set of atoms $N$, let  $T(N)$ denote the set of triples $(n, a, r)$ such that 
$n \in N$, $a\in P_n$, and $r \in \outc_n$. 
A {\em negotiation} is a tuple ${\cal N}=(N, n_0, n_f, \next)$, where 
$n_0, n_f \in N$ are the {\em initial} and {\em final} atoms, and  $\next \colon T(N) \rightarrow 2^N$ is the {\em transition function}. Further, ${\cal N}$ satisfies the following properties: 

(1) every agent of $\agents$ participates in both $n_0$ and $n_f$; 

(2) for every $(n, a, r) \in T(N)$: $\next(n, a, r)= \emptyset$ if{}f $n=n_f$.

\end{definition}


\begin{figure}[t]
\centerline{\scalebox{0.35}{\input{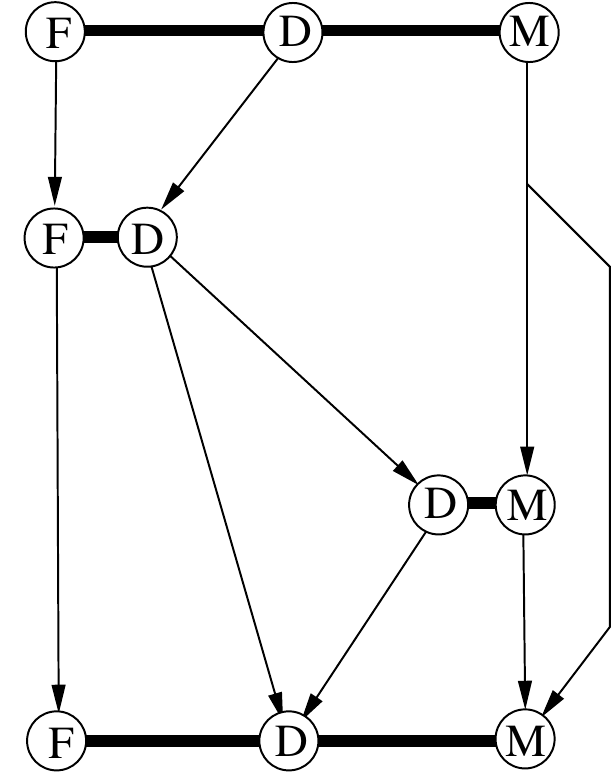_t}} \qquad \scalebox{0.40}{\input{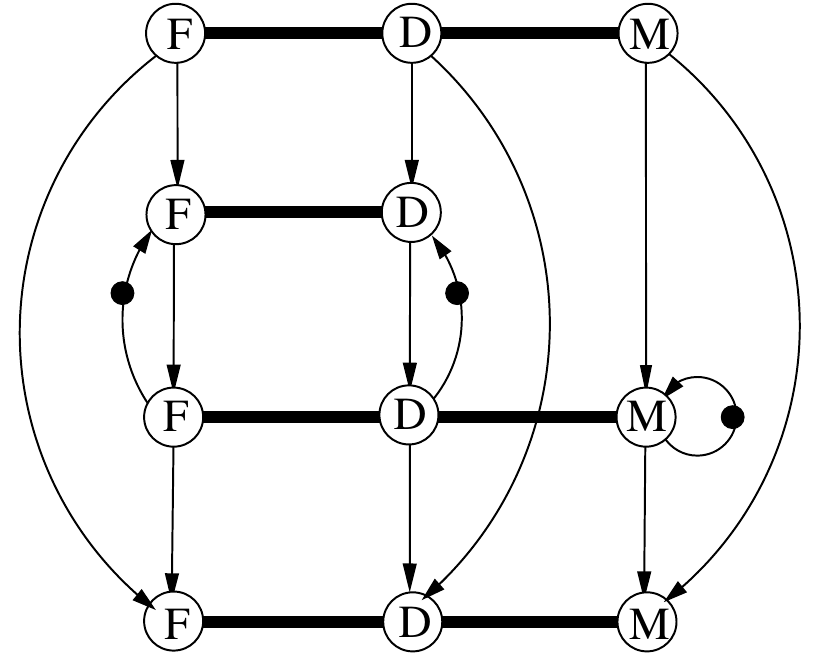_t}}}
\caption{Acyclic and cyclic negotiations.}
\label{fig:dneg}
\end{figure}
Negotiations are graphically represented as shown in Figure \ref{fig:dneg}. 
For each atom $n \in N$ we draw a black bar; for each party $a$ of $P_n$ we 
draw a white circle on the bar, called a {\em port}. For each $(n,a,r) \in T(N)$, 
we draw a hyperarc leading from the port of $a$ in $n$ to all the ports of $a$ in 
the atoms of $\next(n,a,r)$, and label it by $r$. 
Figure \ref{fig:dneg} shows two Father-Daughter-Mother negotiations. On the left,
Daughter and Father negotiate with possible outcomes \texttt{yes} ($\texttt{y}$), 
\texttt{no} ($\texttt{n}$),
and \texttt{ask\_mother} ($\texttt{am}$). If the outcome is the latter, then Daughter and Mother negotiate
with outcomes \texttt{yes}, \texttt{no}. In the negotiation on the right,
Father, Daughter and Mother negotiate with outcomes \texttt{yes} and \texttt{no}.
If the outcome is \texttt{yes}, then Father and Daughter negotiate a return time  
(atom $n_1$) and propose it to Mother (atom $n_2$). 
If Mother approves (outcome \texttt{yes}) , then the negotiation terminates, 
otherwise (outcome \texttt{r}) Daughter and Father renegotiate the return time. 
For the sake of brevity we do not describe the transformers of the atoms.
 

\begin{definition}
The {\em graph associated to a negotiation ${\cal N}=(N, n_0, n_f, \next)$}
is the directed graph with vertices $N$ and edges $\{(n,n')\in N \times N \mid \exists \, (n,a,r) \in T(N) \colon n' \in \next(n,a,r)\} $. The negotiation ${\cal N}$ is {\em acyclic} if its graph has no cycles, {\jde otherwise it is cyclic}.
\end{definition}  

The negotiation on the left of Figure \ref{fig:dneg} is acyclic, the one the right is
cyclic.

\subsection{Semantics}
A {\em marking} of a negotiation ${\cal N}=(N, n_0, n_f, \next)$ is a mapping 
$\vx \colon \agents \rightarrow 2^N$. Intuitively, $\vx(a)$ is the set of atoms that agent $a$ is currently ready to engage in next. 
The {\em initial} and {\em final} markings,  denoted by $\vx_0$ and $\vx_f$ respectively, are given by $\vx_0(a)=\{n_0\}$ and 
$\vx_f(a)=\emptyset$ for every $a \in \agents$.

A marking $\vx$ {\em enables} an atom $n$ if $n \in \vx(a)$ for every $a \in P_n$,
i.e., if every {\jde party of} $n$ is currently ready to engage in it.
If $\vx$ enables $n$, then $n$ can take place and its parties
agree on an outcome $r$; we say that $(n,r)$ {\em occurs}.
Abusing language, we will call this pair also an outcome.
The occurrence of $(n,r)$ produces a next marking $\vx'$ given by $\vx'(a) = \next(n,a,r)$ for every $a \in P_n$, 
and $\vx'(a)=\vx(a)$ for every $a \in \agents \setminus P_n$. 
We write $\vx \by{(n,r)} \vx'$ to denote this,
and call it a {\em small step}. 

{\jde By this definition, $\vx (a)$ is always either $\{n_0\}$ or equals $\next(n,a,r)$ for some atom $n$ and outcome $r$. 
The marking $\vx_f$ can only be reached by the occurrence of $(n_f, r)$ ($r$ being a possible outcome of $n_f$), 
and it does not enable any atom. Any other marking that does not enable any atom is considered a {\em deadlock}.}

Reachable markings are graphically represented by placing tokens {\jd (black dots)} on the forking points of the hyperarcs (or in the middle of an arc). Figure \ref{fig:dneg} shows on the right a marking in {\jde which \texttt{F} and \texttt{D} are ready to engage in $n_1$ and \texttt{M} is ready to engage in 
$n_2$.} 

We write $\vx_1 \by{\sigma}$ to denote that there is a sequence 
$$\vx_1 \by{(n_1,r_1)} \vx_2 \by{(n_2,r_2)}\cdots \by{(n_{k-1},r_{k-1})} \vx_{k} \by{(n_k,r_k)} \vx_{k+1} \cdots$$ 
of small steps such that  $\sigma = (n_1, r_1) \ldots (n_{k}, r_{k}) \ldots$. If
$\vx_1 \by{\sigma}$, then $\sigma$ is an {\em occurrence sequence} from the marking $\vx_1$, and $\vx_1$ enables $\sigma$.
If $\sigma$ is finite, then we write
$\vx_1 \by{\sigma} \vx_{k+1}$ and say that $\vx_{k+1}$ is {\em reachable} from $\vx_1$. 
If  $\vx_1$ is the initial marking then we call $\sigma$ {\em initial occurrence sequence}. If moreover $\vx_{k+1}$ is the final marking, then $\sigma$ is a {\em large step}.

\paragraph{Negotiations and Petri nets.} A negotiation can be assoicated an equivalent 
Petri net with the same occurrence sequences (see \cite{negI}, arXiv version). However,
in the worst case the Petri net is exponentially larger.

\subsection{Soundness}

Following \cite{aalst,DBLP:journals/fac/AalstHHSVVW11}, we introduce a 
notion of well-formedness of a negotiation: 

\begin{definition}
A negotiation is {\em sound} if {\em (a)} every atom is enabled at some reachable marking, and {\em (b)} every occurrence sequence from the initial marking is either a large step or can be extended to a large step. 
\end{definition}

The negotiations of Figure \ref{fig:dneg} are sound. However, if we set {\jde in the left negotiation}
$\next(n_0,\texttt{M}, \texttt{st})= \{n_\texttt{DM}\}$ instead of $\next(n_0,\texttt{M}, \texttt{st})= \{n_\texttt{DM}, n_f\}$, then the occurrence sequence $(n_0,\texttt{st}) (n_\texttt{FD}, \texttt{yes})$
leads to a deadlock.

\begin{definition}
Given a negotiation ${\cal N}=(N,n_0,n_f,\next)$, we attach to each outcome $r$ of $n_f$ a 
{\em summary transformer} $\trf{{\cal N},r}$ as follows. Let $E_r$ be the set of large steps 
of ${\cal N}$ that end with $(n_f,r)$. {\jde We define $\trf{{\cal N},r} = \bigcup_{\sigma \in E_r} \trf{\sigma}$, where for $\sigma = (n_1, r_1)
\ldots (n_k, r_k)$ we define $\trf{\sigma}= \delta_{n_1} (r_1) \cdots \delta_{n_k} (r_k)
$ {\jd (each $\delta_{n_i} (r_i)$ is a relation on $Q_A$;  concatenation is the usual concatenation of relations)}}.
\end{definition}

$\trf{{\cal N},r}(q_0)$ is the set of possible final states of the 
agents after the negotiation concludes with outcome $r$, if their 
initial states are given by $q_0$. 

\begin{definition}
Two negotiations ${\cal N}_1$ and ${\cal N}_2$ over the same set of agents are {\em equivalent}
if they are either both unsound, or if they are both sound, have the same final outcomes (outcomes of the final atom), and 
$\trf{{\cal N}_1, r} = \trf{{\cal N}_2, r}$ for every final outcome $r$.
If ${\cal N}_1$ are equivalent and ${\cal N}_2$ and ${\cal N}_2$ {\jd consists of a single} atom, 
then ${\cal N}_2$ is the {\em summary} of ${\cal N}_1$. 
\end{definition}

Notice that, according to this definition, all unsound negotiations are equivalent. This amounts
to considering soundness essential for a negotiation: if it fails, we do not care about the rest. 

\section{Deterministic Negotiations}
\label{sec:determ}

We introduce 
deterministic negotiations.

\begin{definition}
    
A negotiation $\N$ is deterministic if for every $(n,a,r) \in T(N)$ there is an atom $n'$ such that
$\next(n,a,r) = \{n'\}$
\end{definition}
In the rest of the paper we write $\next(n,a,r) = n'$ instead of $\next(n,a,r) = \{n'\}$.

Graphically, 
a negotiation is deterministic if there are no proper hyperarcs.
The negotiation on the left of Figure \ref{fig:dneg} is not deterministic (it contains a proper hyperarc for Mother), while the one on the right is deterministic. In the sequel, we 
often assume that a negotiation is sound and deterministic, and 
abbreviate ``sound and deterministic negotiation'' to SDN.

\section{Reduction Rules for Deterministic Negotiations}
\label{sec:redrules}

We present three equivalence-preserving reduction rules for negotiations. Two 
of them were already introduced in \cite{negI}, while the iteration rule is new.

A {\em reduction rule}, or just a rule, 
is a binary relation on the set of negotiations. Given a rule $R$,
we write ${\cal N}_1 \by{R} {\cal N}_2$ for $({\cal N}_1, {\cal N}_2) \in R$.
A rule $R$ is {\em correct} if it preserves equivalence, i.e., if 
${\cal N}_1 \by{R} {\cal N}_2$ implies ${\cal N}_1 \equiv{\cal N}_2$. In particular, this 
implies that ${\cal N}_1$ is sound if{}f
${\cal N}_2$ is sound.

Given a set of
rules ${\cal R} = \{R_1, \ldots, R_k\}$, we denote by ${\cal R}^*$ the reflexive 
and transitive closure of $R_1 \cup \ldots \cup R_k$. We say that ${\cal R}$
is {\em complete with respect to a class of negotiations} if, for every
negotiation ${\cal N}$ in the class, there is a negotiation ${\cal N'}$ consisting of a single atom
such that ${\cal N} \by{{\cal R}^*} {\cal N'}$. We describe rules as pairs of a {\em guard} and an {\em action}; 
${\cal N}_1 \by{R} {\cal N}_2$ holds if ${\cal N}_1$ satisfies the guard and 
${\cal N}_2$ is a possible result of applying the action to ${\cal N}_1$.

Slightly more general versions of the following rules have been presented in \cite{negI}. Here we only consider 
deterministic negotiations.\\

\noindent
{\bf \em Merge rule.} Intuitively, the {\em merge rule} merges two outcomes with identical {\jd next enabled atoms} into one single outcome. 

\begin{definition}{Merge rule}

\noindent {\bf Guard}: \begin{tabular}[t]{l} $N$ contains an atom $n$ 
with two distinct outcomes $r_1, r_2 \in \outc_n$
 such \\ that $\next(n,a,r_1) = \next(n,a,r_2)$ for every $a \in \agents_n$.
\end{tabular}

\noindent {\bf Action}: \begin{tabular}[t]{ll}
(1) & $\outc_n \leftarrow (\outc_n \setminus \{r_1, r_2\}) \cup \{r_f\}$,
where $r_f$ is a fresh name. \\
(2) & For all $a \in P_n$: $\next(n,a,r_f) \leftarrow \next(n,a,r_1)$. \\
(3) & $\delta (n, r_f) \leftarrow \delta (n, r_1) \cup \delta (n, r_2)$.
\end{tabular}
\end{definition}

\noindent {\bf \em Shortcut rule.} Inituitively, the shortcut rule merges the outcomes 
of two atoms that can occur one after the other into one single outcome with the same effect. 
Figure \ref{fig:exits} illustrates the definition (ignore the big circle for the moment): 
the outcome $(n, r_f)$, shown in red, is the ``shortcut'' of the outcome $(n,r)$ followed by
the outcome $(n',r')$.

\je{\begin{definition}
Given atoms $n,n'$, we say that $(n,r)$ {\em unconditionally enables} $n'$
if $P_n \supseteq P_{n'}$ and $\next(n,a,r) = n'$ for every $a \in P_{n'}$.
\end{definition}}
\noindent Observe that if $(n,r)$ unconditionally enables $n'$
then, for {\em every} marking $\vx$ that enables $n$, 
the marking $\vx'$ given by $\vx \by{(n,r)} \vx'$ enables $n'$. \je{Moreover, $n'$ 
can only be disabled by its own occurrence.}

\begin{definition}{Shortcut rule for deterministic negotiations}
\label{def:shortcutrule}

\noindent {\bf Guard}: $N$ contains an atom $n$ with an outcome $r$, and 
an atom $n'$, $n' \neq n$, such that $(n,r)$ unconditionally enables 
$n'$.

\noindent {\bf Action}: 
\begin{tabular}[t]{ll}
(1) & $\outc_n \leftarrow (\outc_n \setminus \{r\}) \cup \{r'_f \mid r' \in \outc_{n'}\}$, where $r'_f$ are fresh names. \\
(2) & For all $a \in P_{n'}$, $r' \in \outc_{n'}$: $\next(n,a,r_f') \leftarrow \next(n',a ,r')$.\\
    & For all $a \in P \setminus P_{n'}$, $r' \in \outc_{n'}$: $\next(n,a,r'_f) \leftarrow \next(n,a,r)$. \\
(3) & For all $r' \in \outc_{n'}$: 
$\delta_n (r_f') \leftarrow \delta_n (r) \delta_{n'} (r')$.\\
(4) & If $\next^{-1}(n')=\emptyset$ after (1)-(3), then remove $n'$ from $N$, where \\
    & $\next^{-1}(n') = \{ (\tilde{n},\tilde{a},\tilde{r}) \in T(N) \mid n' \in \next(\tilde{n},\tilde{a},\tilde{r}) \}$.
\end{tabular}
\end{definition}


\noindent {\bf \em Iteration rule.} Loosely speaking, the iteration rule replaces the iteration of a negotiation by one single atom
with the same effect.

\begin{definition}{Iteration rule}

\noindent {\bf Guard:} $N$ contains an atom $n$ with an outcome $r$ such that 
$\next(n,a,r)= n$ for every party $a$ of $n$.

\noindent {\bf Action:} 
\begin{tabular}[t]{ll}
(1) & $R_n \leftarrow \{r_f' \mid r' \in R_n \setminus \{r\}\}$. \\
(2) & For every $r_f' \in R_n $: 
$\delta_n (r_f') \leftarrow \delta_n (r)^*\: \delta_n (r')$.
\end{tabular}

\end{definition}


It is important to notice that reductions preserve determinism:
\begin{proposition}
If a negotiation $\N$ is deterministic and the application of the shortcut, merge or iteration rule yields negotiation $\N'$ then $\N'$ is deterministic, too.
\end{proposition}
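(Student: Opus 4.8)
The plan is to verify that each of the three rules, when applied to a deterministic negotiation $\N$, produces a negotiation $\N'$ that is again deterministic. Recall that determinism means: for every triple $(n,a,r)\in T(N)$, the value $\next(n,a,r)$ is a singleton. Since all three rules only modify the outcome sets of atoms and the values of $\next$ on those atoms (and, in the shortcut rule, possibly delete an atom), the check amounts to inspecting the new $\next$-values introduced by the action and confirming that each is again a singleton — that is, that each new value is defined by copying some old $\next(\tilde n,\tilde a,\tilde r)$ of the deterministic negotiation $\N$, hence is itself a singleton.

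First I would treat the \emph{merge rule}. Here $\outc_n$ loses $r_1,r_2$ and gains one fresh outcome $r_f$, and for every party $a$ we set $\next(n,a,r_f)\leftarrow\next(n,a,r_1)$. Since $\N$ is deterministic, $\next(n,a,r_1)$ is a singleton, so $\next(n,a,r_f)$ is a singleton; all other $\next$-values are untouched. (The guard's requirement $\next(n,a,r_1)=\next(n,a,r_2)$ is what makes the choice of $r_1$ over $r_2$ immaterial, but is not needed for determinism of the result.) Next the \emph{iteration rule}: $\outc_n$ is replaced by fresh copies $r_f'$ of the old outcomes $r'\neq r$, and $\next(n,a,r_f')$ is inherited from $\next(n,a,r')$ — again a singleton by determinism of $\N$ — while the self-looping outcome $r$ is removed. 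No $\next$-value outside $n$ changes, so $\N'$ is deterministic.

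The \emph{shortcut rule} is the one requiring the most care, so I would do it last and in most detail. After step (1), $\outc_n$ gains fresh outcomes $r_f'$, one per $r'\in\outc_{n'}$. Step (2) sets, for $a\in P_{n'}$, $\next(n,a,r_f')\leftarrow\next(n',a,r')$, and for $a\in P_n\setminus P_{n'}$, $\next(n,a,r_f')\leftarrow\next(n,a,r)$; both right-hand sides are singletons because $\N$ is deterministic, so every new $\next(n,a,r_f')$ is a singleton. The only subtlety is the optional deletion of $n'$ in step (4): one must check that removing $n'$ from $N$ cannot leave some surviving triple $(\tilde n,\tilde a,\tilde r)$ with $\next(\tilde n,\tilde a,\tilde r)$ no longer a legal singleton. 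But step (4) is guarded by $\next^{-1}(n')=\emptyset$, meaning no surviving triple points to $n'$ at all; hence every remaining $\next$-value is unchanged and still a singleton, and the triple-set $T(N)$ simply shrinks by the triples of $n'$. Therefore $\N'$ is deterministic in all three cases, which is what had to be shown. The only mild obstacle is the bookkeeping in the shortcut rule — in particular making explicit that condition (4) guarantees no dangling reference to the deleted atom — but no nontrivial argument is involved.
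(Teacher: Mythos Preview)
Your verification is correct. The paper itself states this proposition without proof, treating it as immediate from the definitions of the rules; your case-by-case check that every newly introduced $\next$-value is copied from a singleton value of $\N$ (and that step~(4) of the shortcut rule cannot create a dangling reference) is exactly the routine inspection the authors leave implicit.
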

\begin{theorem}
The merge, shortcut, and iteration rules are correct.
\end{theorem}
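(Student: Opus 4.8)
The plan is to prove correctness of each rule separately, in each case exhibiting a bijection (or at least an equivalence-preserving correspondence) between the large steps of $\N_1$ and those of $\N_2$ that respects summary transformers, and to argue soundness is preserved. For all three rules the action is ``local'' — it touches only one atom $n$ (and, for the shortcut rule, possibly deletes $n'$) — so the key observation I would establish first is a \emph{locality lemma}: if $\sigma$ is an occurrence sequence of $\N_1$, then by determinism every marking $\vx$ appearing along $\sigma$ is of the form $\vx(a) = \next(n,a,r)$ for the last outcome $(n,r)$ in which $a$ participated (this is already noted in the semantics section), so the behaviour near $n$ is tightly constrained. From this I would read off, for each rule, exactly which outcome subsequences of $\N_1$ correspond to single outcomes of $\N_2$.

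\textbf{Merge rule.} This is the easy case. Since $\next(n,a,r_1) = \next(n,a,r_2)$ for all parties, the two outcomes $(n,r_1)$ and $(n,r_2)$ produce the same successor marking, so replacing them by a single outcome $r_f$ with $\next(n,a,r_f) = \next(n,a,r_1)$ changes neither the reachable markings nor the set of enabled atoms at any point; hence soundness is preserved in both directions. For summaries: a large step of $\N_2$ using $(n,r_f)$ lifts to one of two large steps of $\N_1$ (using $r_1$ or $r_2$), and conversely; collecting over all large steps, $\trf{\N_2,r} = \trf{\N_1,r}$ follows because $\delta_n(r_f) = \delta_n(r_1)\cup\delta_n(r_2)$ and union distributes over relational composition. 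I would spell this out with one short induction on the length of occurrence sequences.

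\textbf{Shortcut rule.} Here the correspondence is: in $\N_1$, whenever $(n,r)$ occurs it is immediately followed — possibly after other, independent outcomes interleaved, but \emph{not} by anything re-disabling $n'$, by the remark after the definition of unconditional enabling — eventually by some $(n',r')$; in $\N_2$ this pair is contracted to the single outcome $(n,r'_f)$ with $\delta_n(r'_f) = \delta_n(r)\delta_{n'}(r')$. Because $P_n \supseteq P_{n'}$, no party of $n'$ can do anything between the two without first occurring in $n'$ itself, and since $n'$ ``can only be disabled by its own occurrence'', the occurrence of $(n',r')$ is forced and commutes past any interleaved independent outcomes; so up to the usual diamond/commutation argument for concurrent steps the two occurrence sequences denote the same relation on $Q_A$. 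The deletion of $n'$ in step (4) is harmless precisely because $\next^{-1}(n') = \emptyset$ means no reachable marking enables $n'$ any more. I would need a small lemma that permuting adjacent independent outcomes in an occurrence sequence preserves both reachability and $\trf{\cdot}$ (independent outcomes involve disjoint party sets, hence their $\delta$'s commute as relations).

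\textbf{Iteration rule.} The self-loop outcome $(n,r)$ with $\next(n,a,r)=n$ for all parties means: from any marking enabling $n$, the outcome $(n,r)$ can fire any number $k\ge 0$ of times and then $n$ is still enabled; firing $(n,r)^k(n,r')$ takes the internal state through $\delta_n(r)^k\delta_n(r')$. Summing over $k$ gives $\delta_n(r)^*\delta_n(r')$, which is exactly $\delta_n(r'_f)$ in the rule, so replacing the family $\{(n,r)^k(n,r') : k\ge 0\}$ of outcome-blocks by the single outcome $(n,r'_f)$ preserves the union of $\trf{\cdot}$ over large steps. Soundness: in $\N_1$ the infinite sequence $(n,r)^\omega$ is an occurrence sequence that is never a large step and cannot be extended to one, so clause (b) of soundness \emph{fails} for any $\N_1$ in which $n$ is reachable — wait, that is not right, because $(n,r)^\omega$ being extendable is the issue; I should instead note that each finite prefix $(n,r)^k$ \emph{can} be extended (fire $(n,r')$ next), so clause (b) is fine, and clause (a) is unaffected. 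In $\N_2$ the same markings are reachable minus the transient self-loop, and every atom still gets enabled, so soundness transfers both ways.

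\textbf{Main obstacle.} The delicate point, and where I expect to spend the most effort, is the shortcut rule: making the informal ``$(n,r)$ is eventually followed by $(n',r')$, and everything in between commutes'' fully rigorous. One must handle interleavings in which other atoms fire between $(n,r)$ and $(n',r')$, show those firings are independent of $n'$ (using $P_n\supseteq P_{n'}$ and that $n'$ is only disabled by itself), invoke a commutation lemma to push $(n',r')$ back adjacent to $(n,r)$ without changing the induced state relation, and check the bookkeeping of step (4)'s conditional deletion. Establishing the commutation lemma cleanly up front — independent outcomes (disjoint party sets) can be transposed in any occurrence sequence, preserving both the resulting marking and $\trf{\cdot}$ — is what makes the rest go through, and it is reusable for the shortcut argument above; the merge and iteration cases are then comparatively routine inductions.
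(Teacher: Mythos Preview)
Your plan is sound and far more detailed than the paper's own proof, which simply declares the merge and iteration rules ``obvious'' and defers the shortcut rule to the companion paper. The direct route you sketch---establish a commutation lemma for independent outcomes, then translate large steps rule by rule---is the standard way such results are proved, and is in spirit what the companion paper does. Your self-correction on the iteration rule is also right: clause~(b) of soundness must be read as quantifying over finite occurrence sequences only (otherwise no cyclic negotiation could be sound, contrary to the running examples).

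One genuine caution on the shortcut case: the key commutation lemma you invoke---that outcomes with disjoint party sets have commuting $\delta$'s---is \emph{not} a consequence of the paper's literal definition of an $S$-transformer, which only forbids changing coordinates outside $S$ but allows the new $S$-coordinates to depend on the old values of the others. Under that weak reading one can build a two-agent example (a $\{1\}$-transformer $(q_1,q_2)\mapsto(q_2,q_2)$ interleaved with a $\{2\}$-transformer $(q_1,q_2)\mapsto(q_1,q_1)$) in which shortcutting strictly shrinks the summary relation, so the rule itself would fail to be correct. The intended meaning is surely the stronger one in which a $P_n$-transformer is a relation on $\prod_{a\in P_n}Q_a$ extended by the identity on the remaining coordinates; with that reading your lemma holds and the rest of your argument goes through cleanly. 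You should state and use the stronger form explicitly rather than relying on the parenthetical ``hence their $\delta$'s commute as relations'', which does not follow from what is written in this paper.
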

\begin{proof}
Correctness of the merge and iteration rules is obvious. The correctness of a more general version of the shortcut rule is proved in \cite{negI}\footnote{ The rule of
\cite{negI} has an additional condition in the guard which is always true for deterministic negotiations.}.\qed
\end{proof}

\section{Completeness}
\label{sec:compl}

In \cite{negI} we show that every sound and weakly deterministic acyclic negotiation can be 
summarized to a single atom, and that in the deterministic case the number of rule applications 
is polynomial (actually, \cite{negI} provides a sharper bound than the one in this
 theorem):

\begin{theorem}[\cite{negI}]
\label{thm:polcomp}
Every sound deterministic acyclic negotiation $\cal N$ can be reduced to a single atom by means of 
$|N|^2 + |\Out({\cal N})|$ applications of the merge and shortcut rules, where $N$ is the set of atoms 
of $\N$, and $\Out({\cal N})$ is the set of all outcomes of all atoms of $N$.
\end{theorem}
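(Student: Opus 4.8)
The plan is to reduce $\cal N$ to a single atom by repeatedly applying the shortcut and merge rules, using the acyclic structure to guarantee that the procedure terminates and that each atom other than $n_0$ is eventually removed. First I would fix a topological order $n_0 = m_1, m_2, \ldots, m_k = n_f$ of the atoms compatible with the edges of the graph associated to $\cal N$ (this exists precisely because $\cal N$ is acyclic). The key invariant I want to maintain is that, after processing, the negotiation stays sound and deterministic (soundness by correctness of the rules, determinism by the Proposition), so that all the structural facts we need keep holding. The core observation is that in a sound deterministic negotiation, if $n \neq n_f$ is a minimal atom in the current topological order among the non-initial atoms reachable only through a single predecessor atom $n''$ via an outcome $r$, then $(n'', r)$ unconditionally enables $n$: determinism forces $\next(n'', a, r)$ to be a single atom for each party $a$, and soundness forces $P_{n''} \supseteq P_n$ for the atom that all parties of $n$ are steered into; hence the shortcut rule applies and, once the last incoming outcome into $n$ is shortcut, step (4) of the rule removes $n$.

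The main work is to organize the order of rule applications so the count comes out to $|N|^2 + |\Out({\cal N})|$. I would proceed by eliminating atoms one at a time in reverse topological order down to (but not including) $n_0$: to eliminate an atom $n$, I first use the merge rule to collapse, for each predecessor atom $n''$, all outcomes of $n''$ leading to $n$ with the same continuation — but more economically, I shortcut directly. For each of the at most $|N|-1$ incoming outcomes $(n'', r)$ with $\next(n'',a,r)=n$, one application of the shortcut rule replaces that outcome by $|\outc_n|$ fresh outcomes of $n''$; after all incoming outcomes are handled, $n$ is unreachable and is deleted by step (4). This contributes at most $|N|-1$ shortcut applications per eliminated atom, and there are at most $|N|-1$ atoms to eliminate besides $n_0$, giving the $|N|^2$ term. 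Crucially, every outcome created along the way is a "descendant" of some original outcome of some original atom, so the merge rule is needed only to re-collapse outcomes that become parallel duplicates; charging one merge to each original outcome ever created or destroyed yields the additive $|\Out({\cal N})|$ term. One then checks that the leftover negotiation, with only $n_0$ remaining and all its outcomes looping back to it in the trivial sense forced by property (2) of the definition (since $\next(n_0,a,r)$ must be nonempty and every atom has been folded into $n_0$), collapses to a single atom; in the acyclic case $n_0$ and $n_f$ may coincide after reduction, and no iteration rule is required.

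The hard part, I expect, is the bookkeeping that pins the constant down to exactly $|N|^2 + |\Out({\cal N})|$ rather than a looser polynomial: one must argue that shortcutting an atom $n$ out of existence never forces more than $|N|-1$ applications, which requires knowing that at the moment we eliminate $n$, the set $\next^{-1}(n)$ has size at most $|N|-1$ (true since each predecessor atom can be charged once — but a predecessor may have several outcomes into $n$, which is where an auxiliary merge step, charged against $|\Out({\cal N})|$, is used to first reduce to one outcome per predecessor). Making this amortized argument precise — defining the right potential function on (number of atoms, number of outcomes) and showing each rule application decreases it appropriately — is the technical heart of the proof. The soundness-preservation and determinism-preservation facts are quoted from the earlier Theorem and Proposition, so they are not obstacles; the genuine content is the scheduling of rule applications and the amortized counting.
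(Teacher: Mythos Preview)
Your proposal has a genuine gap in the central step. The shortcut rule requires that $(n'',r)$ \emph{unconditionally enables} $n$, which by definition means $P_{n''} \supseteq P_n$ and $\next(n'',a,r)=n$ for every $a\in P_n$. Your claim that ``soundness forces $P_{n''} \supseteq P_n$'' for a predecessor $n''$ of $n$ is false. Take three agents $A,B,C$, let $n_0$ split them to three one-agent atoms $n_1,n_2,n_3$, and let all three feed into $n_4$ with $P_{n_4}=\{A,B,C\}$, followed by $n_f$. This is sound, deterministic and acyclic, yet none of the incoming outcomes of $n_4$ unconditionally enables $n_4$, since each $P_{n_i}$ is a singleton. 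Your reverse-topological elimination of $n_4$ by shortcutting its incoming edges therefore cannot even start; the guard of the shortcut rule never holds for those edges.

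What actually works (and what the cited paper \cite{negI} does) is essentially the opposite direction: one argues that in a sound deterministic acyclic negotiation some outcome $(n,r)$ always unconditionally enables a non-initial atom $n'$---this is a nontrivial structural lemma relying on soundness, not something that falls out of a topological sort---and then shortcuts \emph{forward} from $n$ through $n'$. In the fork--join example above, one shortcuts from $n_0$ through $n_1$, then through $n_2$, then through $n_3$; only after all three have been absorbed does the resulting outcome of $n_0$ unconditionally enable $n_4$. The $|N|^2$ bound comes from counting such forward shortcuts, not from bounding the in-degree of eliminated atoms. Your amortization sketch for the $|\Out(\N)|$ merge term is along the right lines, but it rests on a scheduling that does not exist. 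A smaller point: your closing remark that the remaining atom's outcomes ``loop back to it'' is also off---when a single atom remains it is both $n_0$ and $n_f$, and its outcomes have empty $\next$ by property~(2); nothing loops.
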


In the rest of the paper section we prove that, surprisingly, the addition of the very simple iteration rule suffices to extend this result to cyclic deterministic negotiations, although with a higher exponent. The argument is complex, and requires a
detailed analysis of the structure of SDNs.

In this section we present the completeness proof, while the complexity result is presented in the next. We illustrate the reduction algorithm by means of an example. Figure \ref{fig:red} (a) shows a cyclic SDN similar to the Father-Daughter-Mother
negotiation on the right of Figure \ref{fig:dneg}. We identify an ``almost acyclic'' fragment, namely the fragment coloured blue in the figure. 
Intuitively, ``almost acyclic'' means that the 
fragment can be obtained by ``merging'' the initial and final atoms of an acyclic SDN; in our example, this is the blue acyclic SDN shown in Figure \ref{fig:red} (b). This acyclic SDN can be summarized using the shortcut and merge rules.  If we apply the same sequence of rules to the blue fragment (with the exception of the last rule, which reduces a negotiation with 
two different atoms and one single outcome to an atomic negotiation) we obtain the negotiation shown in (c). 
The blue self-loop can now be eliminated with the help of the iteration rule, and the procedure can be iterated: We identify an ``almost acyclic'' fragment, coloured red. Its reduction yields the the negotiation shown in (e). The self-loop is eliminated by the iteration rule, yielding an acyclic negotiation, which can be summarized. 

\begin{figure}[h]
\centerline{\scalebox{0.35}{\input{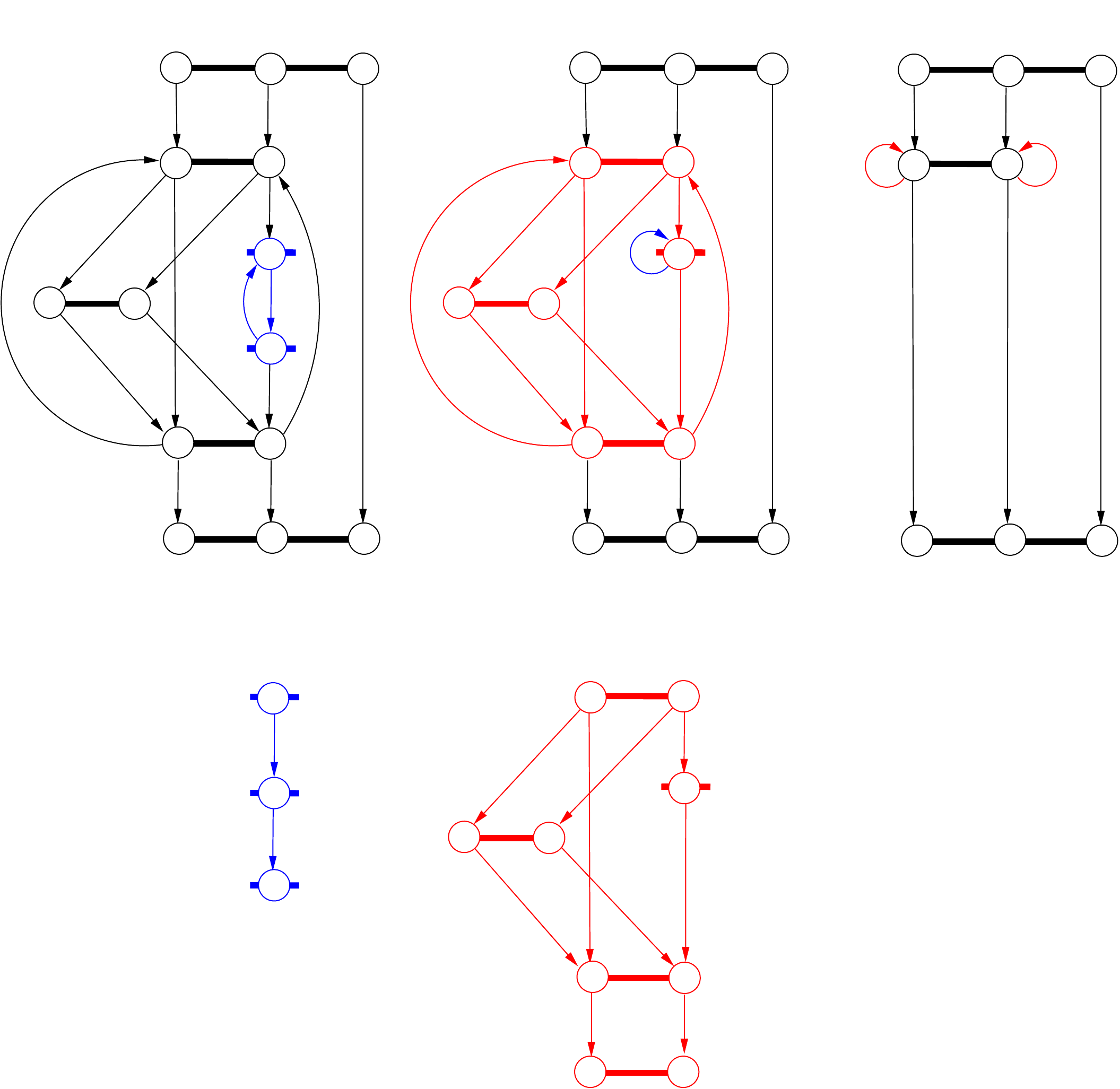_t}}}
\caption{The reduction procedure}
\label{fig:red}
\end{figure}

In order to prove completeness  we must show that every cyclic SDN contains at least one almost acyclic fragment, which is non-trivial. The proof has three parts: We first show that every cyclic SDN has a  {\em loop}: an occurrence sequence from some reachable marking $\vx$ back to $\vx$. Then we show that each minimal loop has a {\em synchronizer}: an atom involving  each agent that is party of any atom of the loop. Finally we show how to use synchronizers to identify a nonempty and almost acyclic fragment. 

\subsection{Lassos and Loops}
\label{subsec:lassos}

\begin{definition}
A {\em lasso} of a negotiation is a pair $(\rho,\sigma)$ of occurrence sequences such that $\sigma$ is not the empty sequence and
$\vx_0 \by{\rho} \vx \by{\sigma} \vx$ for some marking $\vx$. 
A {\em loop} is an occurrence sequence $\sigma$ such that $(\rho, \sigma)$
is a lasso for some occurrence sequence $\rho$.
A {\em minimal loop} is a loop $\sigma$ satisfying the property that there is no other loop $\sigma'$ such that the set of atoms in $\sigma'$ is a proper subset of the set of atoms in $\sigma$.
\end{definition}

Observe that lassos and loops are behavioural notions, i.e., structures of the reachability graph of a negotiation. The following result establishes relations between loops and cycles, where cycles are defined on the graph of a negotiation.

\begin{lemma}
\label{lem:struct}
\begin{itemize}
\item[(1)] Every cyclic SDN has a loop.
\item[(2)] The set of atoms of a minimal loop generates a strongly 
connected subgraph of the graph of the considered negotiation.
\end{itemize}
\end{lemma}
\begin{proof} 
\begin{itemize}
\item[(1)]
Let $\pi$ be a cycle of the graph of the negotiation ${\cal N}$. Let $n_1$ be an arbitrary atom occurring in $\pi$, and let $n_2$ be its successor in $\pi$. We have $n_1 \neq n_f$ because $n_f$ has no successor, and hence no cycle contains $n_f$.

By soundness, some reachable marking $\vx_1$ enables $n_1$. There is an agent $a$ and a result $r$
such that $\next (n_1,a,r)$ contains $n_2$. By determinism we have $\next (n_1,a,r)=\{n_2\}$.
Let $\vx_1 \by{(n_1,r)} \vx_1'$. 
Again by soundness, there is an occurrence sequence from $\vx_1'$ that leads to the final marking. This sequence necessarily contains an occurrence of $n_2$ because 
this is the only atom  agent $a$ is ready to engage in. In particular, some prefix of this sequence leads to a marking $\vx_2$ that enables $n_2$. 

Repeating this argument for all nodes $n_1$, $n_2$, $n_3$, \ldots , $n_k = n_1$ of the cycle $\pi$, we conclude that there is an infinite occurrence sequence, containing infinitely many occurrences of atoms of the cycle $\pi$. Since the set of reachable markings is finite, this sequence contains a loop. 
\item[(2)]
For each agent involved in any atom of the loop, consider the sequence of atoms 
this agent is involved in. By the definition of the graph of the negotiation, 
this sequence is a path of the graph. It is moreover a (not necessarily simple) cycle
of teh graph, because a loop starts and ends with the same marking. So the subgraph generated 
by the atoms in the loop is covered by cycles. It is moreover strongly connected because, for each 
proper strongly connected component, the projection of the atoms of the loop onto the atoms 
in the component is a smaller loop, contradicting the minimality of the loop. \qed
\end{itemize}
\end{proof}

\subsection{Synchronizers}
\label{subsec:synchronizers}

\begin{definition}
A loop $\sigma = (n_1, r_1) \ldots (n_k, r_k)$ is {\em synchronized} if there is an atom $n_i$ in $\sigma$ such that $P_j \subseteq P_i$ for every 
$1 \leq j \leq k$, i.e., every party of every  atom in the loop is also a party of $n_i$. 
We call $n_i$ a {\em synchronizer} of the loop. An atom is a synchronizer of a negotiation if it is a synchronizer of at least one of its loops.
\end{definition} 

Observe that each loop $\vx \by{(n,r)} \vx$ is synchronized. In the graph associated to a negotiation, such a loop appears as a self-loop, i.e., as an edge from atom $n$ to atom $n$.

Some of the loops of the SDN shown in Figure \ref{fig:red} (a) are $(n_1,a) \, (n_2,a) \, (n_4, a) \, (n_5, b)$,  $(n_1,b) \, (n_3,a) \, (n_5, b)$, and $(n_2, a) \,(n_4, b) $.
The first loop is synchronized by $(n_1,a)$ and by $(n_5,b)$, the two others are synchronized by all their outcomes. 

The main result of this paper is strongly based on the following lemma.

\begin{lemma}
\label{lem:struct2}
Every minimal loop of a SDN is synchronized. 
\end{lemma}
\begin{proof}
Let $\sigma$ be a minimal loop, enabled at a reachable marking $\vx$.
Define $N_\sigma$ as the set of atoms that occur in $\sigma$
and $A_\sigma$ as the set of agents involved in atoms of $N_\sigma$.
Since $\N$ is sound, there is an occurrence sequence $\sigma_f$ enabled by $\vx$ that ends with the final atom $n_f$. 

Now choose  an arbitrary agent $\hat{a}$ of $A_\sigma$. Using $\sigma_f$, we
construct a path $\pi$ of the graph of $\N$ as follows:
We begin this path with the last atom $n \in N_\sigma$ that appears in $\sigma_f$ and involves agent $\hat{a}$. We call this atom $n_\pi$. Then we
repeatedly choose the last atom in $\sigma_f$ that involves $\hat{a}$ and moreover is a successor of the last vertex of the path constructed so far. 
By construction, this path has no cycles (i.e., all vertices are distinct), starts with an atom of $N_\sigma$ and has not further atoms of $N_\sigma$, ends with $n_f$, and only contains atoms involving $\hat{a}$.

Since $\vx$ enables the loop $\sigma$ and since $n_\pi \in N_\sigma$, after some prefix of $\sigma$ a marking $\vx_\pi$ is reached which enables $n_\pi$. The loop
$\sigma$ continues with some outcome $(n_\pi,r_1)$, where $r_1$ is one possible result of $n_\pi$. 

By construction of the path $\pi$, 
there is an alternative result $r_2$ of $n_\pi$ such that $\next (n_\pi,\hat{a},r_2)$ is the second atom of the path $\pi$, and this atom does not belong to $N_\sigma$. 
Let $\vx_\pi'$ be the marking reached after the occurrence of $(n_\pi,r_2)$ at
 $\vx_\pi$.

From $\vx_\pi'$,  we iteratively construct an occurrence sequence as follows:
\begin{itemize}
\item[(1)] if an atom $n$ of $N_\sigma$ is enabled and thus some $(n,r)$ occurs in $\sigma$, we continue with $(n,r)$,
\item[(2)] otherwise, if an atom $n$ of the path $\pi$ is enabled, we let this atom occur with an outcome $r$ such that $\next (n,\hat{a},r)$ is the successor atom w.r.t.\  the path $\pi$,
\item[(3)] otherwise we add a minimal occurrence sequence that either leads to the final marking or enables an atom of $\sigma$ or an atom of $\pi$, so that after this sequence one of the previous rules can be applied. Such an occurrence sequence exists because $\N$ is sound and hence the final marking can be reached.
\end{itemize}

First observe that agent $\hat{a}$ will always be ready to engage only in an atom of the path $\pi$. So its token is moved along  $\pi$. Conversely, all atoms of $\pi$ involve $\hat{a}$. Therefore only finitely many atoms of $\pi$ occur in the sequence. This limits the total number of occurrences of type (2).

Agent $\hat{a}$ is no more ready to engage in any atom of $N_\sigma$ during the sequence. So at least $n_\pi$ cannot occur any more in the sequence because $\hat{a}$ is a party of $n_\pi$. By minimality of the loop $\sigma$, there is no loop with a set of atoms in $N_\sigma \setminus \{n_\pi\}$. Since the set of reachable markings is finite, there cannot be an infinite sequence of atoms of $N_\sigma$ (type (1)) without occurrences of other atoms. 

By determinism, each agent ready to engage in an atom of $N_\sigma$ 
can only engage in this atom. So the set of these agents is only changed by occurrences of type (1). By construction, no agent ever leaves the loop after the occurrence of $(n_\pi,r_2)$, i.e.\ every agent of this set remains in this set by an occurrence of type (1). Therefore, the set of agents ready to engage in an atom of $N_\sigma$ never decreases.

For each sequence of type (3) we have three possibilities. 
\begin{itemize}
\item[(a)]
It  ends with the final marking. 
\item[(b)]
It ends with a marking that enables an atom of (2), which then occurs next. However, atoms of (2)  can  occur only finitely often in the constructed sequence, as already mentioned.
\item[(c)]
It ends  with a marking that enables an atom of (1) which then occurs next. In that case the last outcome of this sequence necessarily involves an agent of $A_\sigma$, which after this occurrence is ready to engage in an atom of $N_\sigma$. So it increases the number of agents ready to engage in an atom of $N_\sigma$. Since this number never decreases, this option can also happen only finitely often.
\end{itemize}

Hence, eventually only option (a) is possible, and the sequence  will reach the final marking.
Since the final atom involves all agents, no agent was able to remain in the loop.
In other words: all agents of $A_\sigma$ left the loop when $(n_\pi, r_2)$ has occurred.
As a consequence, all these agents are parties of $n_\pi$, and $n_\pi$ therefore is a synchronizer of the loop $\sigma$.
\qed
\end{proof}

Observe that this lemma does not hold for
arbitrary (i.e., non-deterministic) sound negotiations. 
For the negotiation on the right of
Figure \ref{fig:cyclicnoloops} (all atoms have only one outcome, whose name is omitted),
the sequence $n_1 \, n_2$ is a loop without synchronizers.

The negotiation on the left shows that Lemma \ref{lem:struct}(1) also holds only in the deterministic case. It is sound and cyclic, but has no loops, because the only big step
is $n_0 \, n_1 \, n_2 \, n_1 \, n_f$ (the name of the outcome is again omitted). 

\begin{figure}[h]
\centerline{\scalebox{0.35}{\input{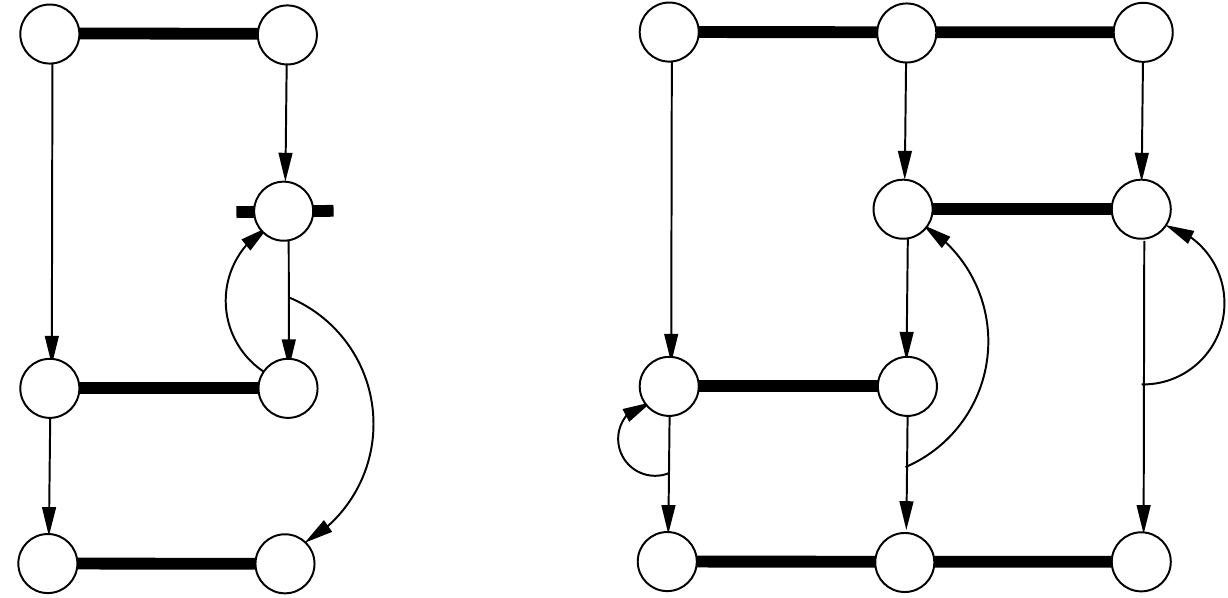_t}}}
\caption{Two sound and cyclic negotiations}
\label{fig:cyclicnoloops}
\end{figure}

\subsection{Fragments}
\label{subsec:fragments}

We assign to each atom $n$ of an SDN a ``fragment'' $\F_n$ as follows: 
we take all the loops synchronized by $n$, and (informally) define $\F_n$ 
as the atoms and outcomes that appear in these loops. Figure \ref{fig:cyclicfrag} (a) and (c) show 
$\F_{n_1}$ and $\F_{n_2}$ for the SDN of Figure \ref{fig:red}. Since a cyclic SDN has at least one 
loop and hence also a minimal one, and since every loop has a synchronizer, at least one of the 
fragments of a cyclic SDN is nonempty. 

Given a fragment $\F_{n}$, let $\N_{n}$ denote the negotiation obtained
by, intuitively, ``splitting'' the atom $n$ into an initial and a final atom. 
Figure \ref{fig:cyclicfrag} (b) and (d) show the ``splittings'' $\N_{n_1}$ and $\N_{n_2}$ of 
$\F_{n_1}$ and $\F_{n_2}$. Not all fragments are almost acyclic. For instance, $\N_{n_1}$ 
is not acyclic, and so $\F_{n_1}$ is not almost acyclic. However, we 
prove that if a fragment is not almost acyclic, then it contains a smaller
fragment (for instance, $\F_{n_1}$ contains $\F_{n_2}$). This shows that
every minimal fragment is almost acyclic.

\begin{figure}[h]
\centerline{\scalebox{0.35}{\input{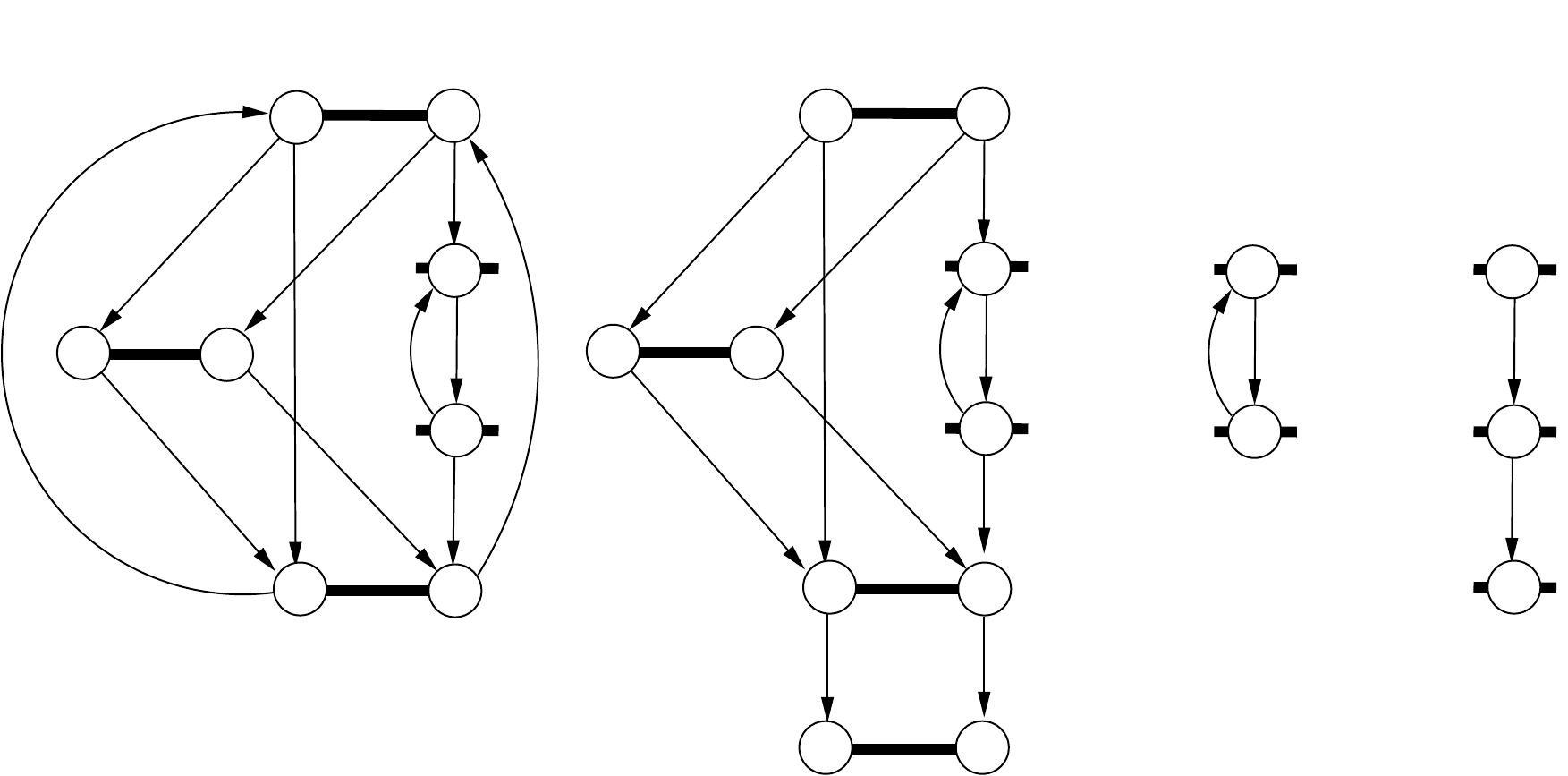_t}}}
\caption{Fragments of the SDN of Figure \ref{fig:red}(a) and their ``splittings''}
\label{fig:cyclicfrag}
\end{figure}

\begin{definition}
Let ${\cal L}$ be a set of loops of $\N$. 
Abusing language, we write $(n, r) \in {\cal L}$ resp. $n \in {\cal L}$ to denote that $(n,r)$ resp. $n$ appears in some loop of ${\cal L}$. The {\em projection} of an atom 
$n=(P_n,R_n,\delta_n) \in {\cal L}$ onto ${\cal L}$ is the atom $n_\L = (P_\L, R_\L,\delta_\L)$, where $P_\L = P_n$, $R_\L = \{  r \mid (n,r) \in {\cal L}\}$, and $\delta_\L((n_\L, r)) = \delta((n,r))$ for every $(n,r) \in  {\cal L}$.
\end{definition}

\begin{definition}
\label{def:fragment}
Let $s$ be an atom of a negotiation $\N$, and let $\L$ be the set of loops 
synchronized by $s$. The {\em $s$-fragment} of $\N$ is the pair $\F_s=(F_s, \next_s)$, where 
$F_s = \{ n_\L \mid n \in \L \}$ and $\next_s(n_\L,a,r)=\next(n,a,r)$ for every
$a \in P_\L$ and $r \in R_\L$.

The {\em $s$-negotiation} of $\N$ is the negotiation  $\N_s= (N_s, n_{s0}, n_{sf}, \next'_s)$,
where 
\begin{itemize}
\item $N_s$ contains the atoms of $F_s$ plus a fresh atom $n_{sf}$; 
\item $n_{s0} = s_\L$; and
\item For every $n_\L \in F_s$, $a \in P_\L$, and $r \in R_\L$:
$$\next_s'(n_\L,a,r) = \left\{
\begin{array}{ll}
\next(n,a,r) & \mbox{ if $\next(n,a,r) \neq s$ }\\
n_{sf}       & \mbox{ otherwise}
\end{array}\right.$$
\end{itemize}
\end{definition}

The following proposition proves some basic properties of $s$-negotiations.

\begin{proposition}\label{prop:occseq}
Let $s$ be an atom of a negotiation $\N$. If $\N$ is a SDN, then  $\N_s$ is a SDN.
\end{proposition}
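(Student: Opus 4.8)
The plan is to show that $\N_s$ inherits soundness from $\N$ by relating occurrence sequences of $\N_s$ to occurrence sequences of $\N$, and that determinism is immediate from the construction (the transition function of $\N_s$ is obtained by restricting $\next$ and redirecting the arcs pointing to $s$ to the fresh atom $n_{sf}$, so $\N_s$ is deterministic since $\N$ is). The core of the argument is therefore soundness. First I would establish the key correspondence: every atom $n_\L \in F_s$ and every outcome $r \in R_\L$ appears in some loop synchronized by $s$; hence $P_{n_\L} \subseteq P_s$, i.e., $s$ is a party of every atom of $F_s$, and $s$ is the synchronizer. This means that in $\N_s$, starting from $n_{s0} = s_\L$, agent-by-agent every token moves through atoms of $F_s$ until it reaches $n_{sf}$, and since $s$ participates in every atom of $F_s$, the atom $s_\L$ ``observes'' all the agents.

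Next I would prove condition (b) of soundness for $\N_s$ (every occurrence sequence from the initial marking extends to a large step). Given an occurrence sequence $\tau$ of $\N_s$ from the initial marking, I would lift it to an occurrence sequence of $\N$: since $\F_s$ collects exactly the loops synchronized by $s$, and since at the marking $\vx_{s0}$ of $\N_s$ every agent is ready to engage in $s_\L$, this marking corresponds — via the ``splitting'' — to a reachable marking $\vx$ of $\N$ enabling $s$; and each small step $(n_\L, r)$ of $\N_s$ other than those leading into $n_{sf}$ corresponds to the small step $(n, r)$ of $\N$. Using soundness of $\N$, I can extend the lifted sequence to a large step of $\N$; the first point at which all agents would leave $F_s$ (which must happen, because $n_f$ involves all agents and, by the synchronizer argument of Lemma~\ref{lem:struct2}, agents leave $\F_s$ only simultaneously through an occurrence of $s$) corresponds in $\N_s$ to an occurrence $(n_\L, r)$ with $\next(n,a,r) = s$ for all parties, i.e., a step into $n_{sf}$. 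Projecting back down gives the desired extension of $\tau$ to a large step of $\N_s$. For condition (a), every atom of $F_s$ lies on some loop synchronized by $s$, hence by the loop's definition is enabled at some marking reachable from $\vx$ enabling $s$, which projects to a reachable marking of $\N_s$ enabling that atom; and $n_{sf}$ is reached by the large step just constructed.

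The main obstacle I expect is the precise bookkeeping of the marking correspondence between $\N$ and $\N_s$, specifically showing that once an agent of $A_s$ (the agents of the fragment) is ``inside'' the fragment in $\N_s$, it cannot escape except through $n_{sf}$, and conversely that a run of $\N$ which enters the loops synchronized by $s$ and later leaves them does so all-at-once. This is exactly where Lemma~\ref{lem:struct2} and the determinism of $\N$ are needed: determinism forces each agent ready to engage in an atom of a loop to engage only in that atom, so the set of agents ``in the fragment'' changes only by synchronized occurrences, and the synchronizer lemma guarantees that a minimal loop's agents all leave through the synchronizer. A subtle point to handle carefully is that $F_s$ collects \emph{all} loops synchronized by $s$, not just minimal ones, so I would either argue that every loop synchronized by $s$ decomposes appropriately, or reduce to the minimal case; I would also need to check that no atom outside $F_s \cup \{n_{sf}\}$ is ever ``needed'' in a lifted run once we are inside, which again follows from the fact that $s$'s presence synchronizes all agents of $A_s$. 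Once these structural facts are in place, conditions (a) and (b) follow by routine lifting and projection of occurrence sequences.
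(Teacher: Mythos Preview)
Your plan for determinism and for soundness condition~(a) matches the paper's argument. The gap is in condition~(b). Your direct approach---lift $\tau$ to $\N$, extend to a large step of $\N$ using soundness, then project the extension back to $\N_s$---breaks at the projection step. When you extend the lifted sequence in $\N$, the extension may use outcomes that are not in $\F_s$: an exit $(n,r)$ with $n \in F_s$ but $(n,r) \notin \Out(\F_s)$, or atoms outside $F_s$ altogether. Such steps have no counterpart in $\N_s$, so you cannot project them back. Your justification, that ``agents leave $\F_s$ only simultaneously through an occurrence of $s$'' by Lemma~\ref{lem:struct2}, is not what that lemma says: Lemma~\ref{lem:struct2} asserts only that every minimal loop has a synchronizer, which tells you nothing about how a run can leave $\F_s$. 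In fact a run of $\N$ can leave $\F_s$ through any exit and proceed to $n_f$ without ever revisiting $s$, so there need be no prefix of the extension that corresponds to a completion of $\tau$ to a large step of $\N_s$. (The property you are reaching for is closer to the later Lemma~\ref{lem:exit}, and even that lemma does not say exits go through $s$.)

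The paper's argument is by contradiction and never extends toward $n_f$ in $\N$. Assume some $\tau$ in $\N_s$ reaches a marking $\vx$ enabling no atom of $\N_s$. Lift to $\rho\tau$ in $\N$, and first pre-extend $\rho$ by a \emph{maximal} sequence $\rho'$ consisting only of atoms whose parties all lie in $A \setminus P_s$. The marking reached by $\rho\rho'\tau$ in $\N$ is then a deadlock of $\N$: an enabled atom with all parties in $P_s$ would lie in $F_s$ (determinism pins each agent of $P_s$ to a single atom of $F_s$ after $\tau$) and hence would already be enabled at $\vx$, contradicting the assumption; an enabled atom with all parties outside $P_s$ contradicts maximality of $\rho'$; and an enabled atom of mixed type would require some agent of $P_s$ to be ready for an atom outside $F_s$, which is impossible. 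This contradicts soundness of $\N$. The maximal pre-execution of external-only atoms is the device your plan is missing; without it you cannot rule out that progress in $\N$ is still possible via atoms that $\N_s$ does not see.
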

\begin{proof}
Let $n'$ be an atom of $\N_n$. By definition, there is a lasso $(\rho,\sigma)$ of
$\N$, synchronized by $n$, such that $n'$ appears in $\sigma$. By the definition of $\N_n$, we have
that $\sigma$ is an occurrence sequence of $\N_n$, and so that $n'$ can occur in $\N_n$.

Assume now that $\sigma$ is an occurrence sequence of $\N_n$ that is not a large 
step of $\N_n$ and cannot be extended to a large step of $\N_n$. 
W.l.o.g. we can assume that $\sigma$ does not enable any atom of $\N_n$. 
Let $\vx$ be the marking reached by $\sigma$. It follows easily from the definition
of $\N_n$ that there is an occurrence sequence $\rho \sigma$ of $\N$. 
Moreover, if  $\vx$ be the marking reached by this sequence, then
$\vx'$ is the projection of $\vx$ onto the set $P$ of parties of $n$.

By soundness there is a maximal
occurrence sequence $\rho\rho'$ such that $\rho'$ contains only atoms with parties
in $\agents\setminus P_n$. Clearly $\rho\rho'\sigma$ is an occurrence sequence of $\N$.
Let $\vx''$ be the marking reached by $\rho\rho'\sigma$. Clearly, we still have that
$\vx'$ is the projection of $\vx''$ onto $P$. We claim that $\vx''$ is a deadlock, 
contradicting the soundness of $\N$. To prove the claim, assume that $\vx''$ 
enables some atom $n'$ with set of parties $P'$. If $P' \subseteq P_t$, then 
$n'$ is also enabled at $\vx'$, contradicting that $\sigma$ does not enable any atom of $\N_n$. If $P' \subseteq \agents\setminus P$, then $\rho\rho'$ enables $n'$,
contradicting the maximality of $\rho\rho'$. Finally, if $P' \cap P \neq \emptyset P' \cap (\agents\setminus P)$, then there is an agent $a \in P$ such that $\vx''(a) \notin N_n$. But by definition of $\sigma$ we have $\vx'(a) \in N_n$, contradicting that $\vx'$ is the
projection onto $P$ of $\vx''$. \qed
\end{proof}

\begin{lemma}
A cyclic SDN contains an atom $n$ such that $\N_n$ is an acyclic SDN. 
\label{lem:soundsubneg}
\end{lemma}
\begin{proof}
Let $\N$ be a cyclic SDN. By Lemma \ref{lem:struct}, $\N$ has a loop and hence a also minimal loop. By
Lemma \ref{lem:struct2} this loop has a synchronizer $n$, and so  $\N_n$ is nonempty.
Choose $n$ so that $\N_n$ is nonempty, but its number of atoms is minimal. We claim that
$\N_n$ is acyclic. Assume the contrary. By Lemma \ref{prop:occseq}, $\N_n$ is
a SDN. By Lemmas \ref{lem:struct} and \ref{lem:struct2}, exactly as above, $\N_n$ contains an atom $n'$ such that $\N_{nn'}$ is nonempty. Clearly, 
$\N_{nn'}$ contains fewer atoms than $\N_n$ and is isomorphic to $\N_{n'}$. 
This contradicts the minimality of $\N_n$.\qed
\end{proof}

The example on the left of Figure \ref{fig:cyclicnoloops} shows that this result does not hold 
for the non-deterministic case.  

\subsection{The reduction procedure}
\label{subsec:redproc}
We can now finally formulate a reduction procedure to summarize an arbitrary SDN.
\smallskip
\begin{center}
\begin{minipage}{12cm}
{\bf Input:} a deterministic negotiation $\N_0$;

\smallskip

{\algo
$\N \leftarrow \mbox{result of exhaustively applying the merge rule to $\N_0$}$;
\while{\mbox{$\N$ is cyclic }}
> select $s \in N$ such that $\N_s$ is acyclic;
> apply to $\N$ the sequence of rules used to summarize $\N_s$ (but the last);\label{ac}
> apply the iteration rule to $s$;\label{it}
> exhaustively apply the merge rule \label{me}
apply the reduction sequence of Theorem \ref{thm:polcomp}}
\end{minipage}
\end{center}
\smallskip


\begin{theorem}
\label{thm:completeness}
The reduction procedure returns a summary of $\N_0$ if{}f $\N_0$ is sound.
\end{theorem}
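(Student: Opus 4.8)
The plan is to combine the three facts already available: every rule is correct, so consecutive negotiations in any run of the procedure are equivalent and in particular equi-sound; every rule preserves determinism; and Lemmas~\ref{lem:struct}, \ref{lem:struct2}, \ref{lem:soundsubneg}, Proposition~\ref{prop:occseq} and Theorem~\ref{thm:polcomp} guarantee that each instruction of the procedure is well defined whenever $\N$ is an SDN. The ``only if'' direction is then immediate. If the procedure terminates after executing every instruction successfully, the returned negotiation $\N'$ consists of a single atom and, by correctness of every rule used, $\N_0\equiv\N'$; a one-atom negotiation is sound (its only occurrence sequences are the empty one and a single $(n_0,r)$, which is a large step), hence $\N_0$ is sound. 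Conversely, an unsound negotiation is equivalent to no one-atom negotiation, so if $\N_0$ is unsound the procedure cannot return a summary.

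For the ``if'' direction I would maintain the invariant that each time the guard of the \textbf{while} loop is tested, $\N$ is an SDN equivalent to $\N_0$ to which the merge rule is no longer applicable. This holds initially because $\N_0$ is deterministic by hypothesis, sound by assumption, and exhaustive merging preserves soundness, determinism and equivalence. For the inductive step assume $\N$ is a cyclic SDN. By Lemma~\ref{lem:struct}(1) it has a loop, hence a minimal one, which by Lemma~\ref{lem:struct2} has a synchronizer $s$; by Lemma~\ref{lem:soundsubneg} and Proposition~\ref{prop:occseq} the negotiation $\N_s$ is a nonempty acyclic SDN, so the selection step is well defined. By Theorem~\ref{thm:polcomp}, $\N_s$ admits a summarizing sequence $R_1,\dots,R_m$ of merge and shortcut rules; since no merge rule removes an atom, $R_m$ is a shortcut collapsing the two atoms of the penultimate negotiation into one. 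These two atoms are the (images of the) initial atom $n_{s0}$ and final atom $n_{sf}$ of $\N_s$, and by the guard of the shortcut rule they are linked by a single outcome $r$ of the initial one that unconditionally enables the final one, i.e.\ $\next_s'(n_{s0},a,r)=n_{sf}$ for every $a\in P_s$.

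The heart of the argument is the claim that the prefix $R_1,\dots,R_{m-1}$ can be applied to $\N$ (step~(\ref{ac})) and turns it into an SDN $\N_1\equiv\N_0$ in which $s$ has an outcome $r^\ast$ with $\next(s,a,r^\ast)=s$ for every $a\in P_s$. Intuitively this holds because $s$ is a synchronizer: the atoms and outcomes of the fragment $\F_s$ sit inside $\N$ exactly as the corresponding atoms of $\N_s$ do (Definition~\ref{def:fragment}), the guards and actions of merge and shortcut only inspect and rewrite this local part, and the only differences — extra outcomes of fragment atoms that leave $\F_s$, and extra atoms of $\N$ pointing into $\F_s$ — neither block any rule nor alter how the fragment is rewritten (in $\N$ the deletions in clause~(4) of the shortcut rule may simply fail, which is harmless). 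The result of running $R_1,\dots,R_{m-1}$ on $\N$ is therefore the penultimate negotiation of $\N_s$ with $n_{s0}$ and $n_{sf}$ identified as $s$, so the outcome $r$ with $\next_s'(n_{s0},a,r)=n_{sf}$ becomes precisely a self-loop outcome $r^\ast$ of $s$. Granting the claim, step~(\ref{it}) applies the iteration rule to $s$ via $r^\ast$ — whose guard is exactly the existence of such an outcome — yielding an SDN $\N_2\equiv\N_0$ without that self-loop, and step~(\ref{me}) restores merge-inapplicability, re-establishing the invariant. Termination of the \textbf{while} loop I would derive from the auxiliary fact that no rule application enlarges the set of synchronizers of a negotiation: given a loop of the reduced negotiation synchronized by some atom $t$, replacing each occurrence of a fresh outcome by the occurrence sequence it abbreviates ($(n,r_1)$ for merge, $(n,r)(n',r')$ for shortcut, $(n,r)^k(n',r')$ for iteration) yields a loop of the original negotiation using the same atoms together with, in the shortcut case, the atom $n'$, whose parties are contained in those of $n$ and hence of $t$. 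Combined with the claim, steps~(\ref{ac})--(\ref{me}) drop $s$ from the set of synchronizers: any loop of $\N_2$ synchronized by $s$ would use only atoms with parties in $P_s$, hence would lie entirely in the portion of $\N$ rewritten in step~(\ref{ac}), which is acyclic, being obtained by summarizing the acyclic $\N_s$ and then deleting the residual self-loop in step~(\ref{it}). Thus each pass strictly decreases the (finite) number of synchronizers, so the loop runs at most $|N|$ times and, since a cyclic SDN always has a synchronizer, halts with an acyclic SDN $\N''\equiv\N_0$; applying the reduction sequence of Theorem~\ref{thm:polcomp} to $\N''$ gives a single atom $\N'\equiv\N_0$, i.e.\ a summary of $\N_0$.

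The main obstacle is the key claim of the previous paragraph: making precise the correspondence between running the summarization of $\N_s$ ``on the fragment of $\N$'' and running it on $\N_s$, so that (i) every guard that fires in $\N_s$ also fires in $\N$ at the matching step, (ii) the two runs stay in lockstep on the fragment even though the clause~(4) deletions and the non-fragment outcomes make them diverge on the rest of $\N$, and (iii) the end result is a self-loop at $s$ whose removal destroys all $s$-synchronized loops. This is where the structural analysis of SDNs and of $s$-negotiations (Definition~\ref{def:fragment}, Proposition~\ref{prop:occseq}) has to be pushed considerably further, and where I would expect the bulk of the work, together with the finer accounting needed for the polynomial bound of Section~\ref{sec:complexity}.
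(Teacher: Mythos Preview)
Your proposal is correct and follows essentially the same route as the paper: induction on the number of synchronizers, selection of an acyclic $\N_s$ via Lemmas~\ref{lem:struct}, \ref{lem:struct2} and~\ref{lem:soundsubneg}, transfer of the summarizing rule sequence for $\N_s$ to $\N$ to produce a self-loop at $s$, removal of that self-loop by the iteration rule, and appeal to Theorem~\ref{thm:polcomp} in the acyclic base case. You are more explicit than the paper on several points---the ``only if'' direction, the loop invariant, and the argument that the set of synchronizers does not grow under any rule---and you rightly flag as the main obstacle the claim that the rule sequence for $\N_s$ can be replayed inside $\N$ to yield a self-loop at $s$; the paper's proof simply asserts this (``its application to $\N$ ends with a negotiation having a unique self-loop-outcome on $s$'') without further justification.
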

\begin{proof}
By induction on the number $k$ of atoms of $\N$ that synchonize at least one loop.
If $k=0$, then by Lemma \ref{lem:struct} and \ref{lem:struct2} $\N$ is acyclic, and the result follows 
from Theorem \ref{thm:polcomp}. If $k > 0$, then by Lemma \ref{lem:soundsubneg} $\N$ contains an 
almost acylic fragment $\F_s$, and so $\N_s$ is acyclic. Since the sequence of rules of line \ref{ac} 
summarizes $\N_s$, its application to $\N$ ends with a negotiation having a unique self-loop-outcome on 
$s$. After removing this outcome with the iteration rule in line \ref{it}, we obtain a SDN with $k-1$ synchronizers, which can be summarized by induction hypothesis (line \ref{me} is not necessary for completeness,
but required for the complexity result of the next section).
\end{proof}

\section{Complexity} 
\label{sec:complexity}

We analyze the number of rule applications required by the reduction procedure.
Let $\N_{i} = (\N_i, n_{0i}, n_{fi}, \next_i)$  
be the negotiation before the $i$-th execution of the while oop. 
The next lemma collects some basic properties of the sequence $\N_1,\N_2, \ldots$.

\begin{lemma}
\label{lem:basics}
For every $i \geq 1$: 
\begin{itemize}
\item[(a)] $N_{i+1} \subseteq N_i$; 
\item[(b)] the merge rule cannot be applied to $\N_i$; and 
\item[(c)] $\N_{i+1}$ has fewer synchronizers than $\N_i$.\\
In particular, by (c) the while loop is executed at most $|N_1|=|N_0|$ times.
\end{itemize}
\end{lemma}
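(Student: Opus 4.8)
The plan is to establish the three properties of Lemma~\ref{lem:basics} in order, since each is used by the ones that follow, and then read off the iteration bound as a corollary.

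\medskip

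\noindent\textbf{Part (a).} First I would track which atoms each rule can remove. The merge and iteration rules never delete atoms: they only rewrite the outcome set and transformers of a single atom $n$. The shortcut rule deletes an atom only in step~(4), and only $n'$, which becomes unreachable after the rewriting. In the body of the while loop, line~\ref{ac} applies the summarizing sequence for $\N_s$ (minus its last rule); line~\ref{it} applies the iteration rule to $s$; line~\ref{me} applies the merge rule exhaustively. None of these introduces a new atom, and any atom removed is removed by a shortcut step, hence as the $n'$ of that step. So $N_{i+1}\subseteq N_i$. (One must also check that line~\ref{ac} is legitimate, i.e.\ that applying the rule sequence computed for $\N_s$ makes sense on $\N$ itself; this is exactly the content of the completeness argument in Theorem~\ref{thm:completeness}, where the reduction of the fragment $\F_s$ leaves $\N$ with a single self-loop outcome on $s$.)

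\medskip

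\noindent\textbf{Part (b).} The point is that line~\ref{me} is the \emph{last} thing the while body does before control returns to the guard, so $\N_i$ for $i\geq 2$ is the result of exhaustively applying the merge rule; and $\N_1$ is the result of the preprocessing step (``$\N \leftarrow$ result of exhaustively applying the merge rule to $\N_0$''). So in both cases $\N_i$ is a merge-normal form. The only subtlety is that ``exhaustively applying the merge rule'' must actually terminate, which it does because each merge strictly decreases $|\Out(\N)|$, a finite nonnegative quantity; hence a normal form exists and the statement is meaningful.

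\medskip

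\noindent\textbf{Part (c).} This is the main obstacle and the only genuinely non-routine part. I would argue that after one pass of the while body the synchronizer $s$ is no longer a synchronizer, and no new synchronizers are created. For the first claim: line~\ref{ac} reduces the fragment $\F_s$ exactly as its acyclic splitting $\N_s$ gets summarized, so (as in the proof of Theorem~\ref{thm:completeness}) afterwards the only loop through $s$ is a single self-loop, i.e.\ an outcome $r$ with $\next(s,a,r)=s$ for all parties $a$; line~\ref{it} then applies the iteration rule, which deletes precisely that outcome. After line~\ref{it}, $s$ lies on no loop at all, so it is not a synchronizer. For the second claim I would use part~(a): the atom set only shrinks, and I would check that the loops of $\N_{i+1}$ are, up to the rewriting performed by the rules, loops of $\N_i$ — more precisely, every loop of $\N_{i+1}$ ``lifts'' to a loop of $\N_i$ using the correctness of the rules and the fact that shortcut/merge/iteration each simulate the removed behaviour by a composite outcome, so a synchronizer of a loop in $\N_{i+1}$ was already a synchronizer of the corresponding loop in $\N_i$. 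Hence the synchronizer set of $\N_{i+1}$ is contained in that of $\N_i$ minus $s$, and is therefore strictly smaller. Finally, since $\N_1\subseteq N_0$ has at most $|N_0|$ atoms and each atom can be a synchronizer at most once, the while loop runs at most $|N_1|=|N_0|$ times; the equality $|N_1|=|N_0|$ holds because the merge rule does not delete atoms. I expect the lifting-of-loops argument in part~(c) to require the most care, since one must be precise about how a self-loop outcome created by the fragment reduction interacts with the iteration rule, and about why the merge-normalisation in line~\ref{me} cannot resurrect a synchronizer.
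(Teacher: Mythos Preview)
Your proposal is correct and follows essentially the same approach as the paper's own proof: parts (a) and (b) are read off from the definitions of the rules and the algorithm, and part (c) is obtained by observing that every synchronizer of $\N_{i+1}$ is already a synchronizer of $\N_i$, while the selected atom $s$ ceases to be one because its loops are collapsed to self-loops and then removed by the iteration rule. The paper states (a), (b), and the inclusion of synchronizer sets without further justification; your version simply spells out the details (which rule can delete which atom, why merge-normal forms exist, and why loops of $\N_{i+1}$ lift to loops of $\N_i$), so the difference is one of explicitness rather than strategy.
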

\begin{proof}
Parts (a) and (b) follow immediately from the definitions of the rules and the 
reduction algorithm. For (c), we observe that every synchronizer of $\N_{i+1}$ 
is a synchronizer of $\N_{i}$, but the atom $s$ selected at the $i$-th loop execution is not a synchronizer of $\N_{i+1}$, because all loops synchronized by $s$ are collapsed to self-loops on $s$ during the $i$-th iteration of the loop, and then removed by the iteration rule.\qed
\end{proof}

By Theorem \ref{thm:polcomp}, during the $i$-th iteration of the while loop
line \ref{ac} requires at most $|N_i|^2 + |\Out(\N_i)|$ rule applications. 
Line \ref{it} only requires one application. Now, let $\N_i'$ be the negotiation obtained after the
execution of line \ref{it}. The number of rule applications of line \ref{me} is 
clearly bounded by the number of outcomes of $\Out(\N_i')$ .
For the total number of rule applications  ${\it Appl}(\N_0)$ we then obtain.

\begin{lemma}\label{lem:firstbound}
$${\it Appl}(\N_0) \in {\cal O} ( \; |N_0|^3 + |N_0| \sum_{i=1}^{|N_0|} |\Out(\N_i)|+|\Out(\N_i')| \;)$$
\end{lemma}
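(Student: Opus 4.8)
The plan is to bound each of the three quantities that appear in the statement of Lemma~\ref{lem:firstbound} separately, following the structure of the reduction algorithm. The term $|N_0|^3$ accounts for line~\ref{ac}: by Theorem~\ref{thm:polcomp}, summarizing the acyclic SDN $\N_s$ in iteration $i$ costs at most $|N_i|^2 + |\Out(\N_i)|$ rule applications, and since $N_i \subseteq N_1 = N_0$ by Lemma~\ref{lem:basics}(a), this is at most $|N_0|^2 + |\Out(\N_i)|$. Summing over the at most $|N_0|$ iterations of the while loop (again Lemma~\ref{lem:basics}(c)) contributes $|N_0|^3$ plus $\sum_i |\Out(\N_i)|$. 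The single application of the iteration rule in line~\ref{it} contributes only $|N_0|$ in total, which is absorbed into $|N_0|^3$.

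The remaining contribution is line~\ref{me}, the exhaustive application of the merge rule after line~\ref{it}. First I would observe that each application of the merge rule strictly decreases the number of outcomes of the current negotiation: it replaces two outcomes $r_1, r_2$ of some atom by a single fresh outcome $r_f$ (actions (1)--(3) of the merge rule), leaving every other atom untouched. Hence the number of merge applications in iteration $i$ is bounded by the number of outcomes present when line~\ref{me} begins, i.e.\ by $|\Out(\N_i')|$, where $\N_i'$ is the negotiation produced by line~\ref{it}. Summing over the at most $|N_0|$ iterations gives the term $|N_0| \sum_i |\Out(\N_i')|$, after noting (as in the previous paragraph) that the outer sum has at most $|N_0|$ terms so each $|\Out(\N_i')|$ can be charged a factor $|N_0|$; similarly the $\sum_i |\Out(\N_i)|$ left over from line~\ref{ac} is charged a factor $|N_0|$. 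Collecting the three bounds yields
$${\it Appl}(\N_0) \in {\cal O}\!\left( |N_0|^3 + |N_0| \sum_{i=1}^{|N_0|} |\Out(\N_i)| + |\Out(\N_i')| \right),$$
which is exactly the claimed estimate (reading the summand as $|\Out(\N_i)| + |\Out(\N_i')|$ under the single sum).

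The routine verifications I would need to make explicit are: that line~\ref{ac} is legitimately governed by Theorem~\ref{thm:polcomp} (the negotiation $\N_s$ is an acyclic SDN by Lemma~\ref{lem:soundsubneg} and Proposition~\ref{prop:occseq}, so the theorem applies and the bound $|N_s|^2 + |\Out(\N_s)| \le |N_i|^2 + |\Out(\N_i)|$ holds since $\N_s$ is built from a subset of the atoms and outcomes of $\N_i$); that the iteration rule is applied exactly once per loop iteration (it eliminates the unique self-loop outcome on $s$ created by line~\ref{ac}, as argued in the proof of Theorem~\ref{thm:completeness}); and that the while loop runs at most $|N_0|$ times, which is Lemma~\ref{lem:basics}(c). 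The only mildly delicate point—and the one I would state carefully rather than wave at—is the monotonicity argument for line~\ref{me}: one must confirm that a merge application never creates new atoms or new outcomes beyond the single replacement outcome, so that the outcome count is a strictly decreasing, nonnegative integer potential and therefore at most $|\Out(\N_i')|$ merges occur. I do not expect a serious obstacle here; the result is essentially bookkeeping on top of Theorem~\ref{thm:polcomp} and Lemma~\ref{lem:basics}.
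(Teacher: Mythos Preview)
Your proposal is correct and follows essentially the same approach as the paper: bound the cost of each while-loop iteration as $|N_i|^2 + |\Out(\N_i)| + 1 + |\Out(\N_i')|$ using Theorem~\ref{thm:polcomp} for line~\ref{ac} and the outcome-decreasing property of the merge rule for line~\ref{me}, then sum over at most $|N_0|$ iterations via Lemma~\ref{lem:basics}(a),(c). Your write-up is somewhat more explicit than the paper's (which compresses the argument into a three-line display), but the decomposition, the lemmas invoked, and the arithmetic are identical.
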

\begin{proof}
$$\begin{array}{rcll}
{\it Appl}(\N_0) & \leq  & \displaystyle \sum_{i=1}^{|N_0|} (|N_i|^2 + |\Out(\N_i)| + 1 + |\Out(\N_i')|) 
& \mbox{ \quad Lemma \ref{lem:basics}(c),}\\
& & & \mbox{ \quad Theorem \ref{thm:polcomp}} \\
& \leq  & \displaystyle \sum_{i=1}^{|N_0|} (|N_0|^2 +1 +|\Out(\N_i)|+|\Out(\N_i')|) 
& \mbox{ \quad Lemma \ref{lem:basics}(a)} \\
& \in  & {\cal O} ( \; |N_0|^3 + |N_0| \sum_{i=1}^{|N_0|} |\Out(\N_i)|+|\Out(\N_i')| \;) & \hfill \qed
\end{array}$$
\end{proof}
However, we cannot yet bound ${\it Appl}(\N_0)$ by a polynomial in
$|N_0|$ and $|\Out(N_0)|$, because, in principle, 
the number of outcomes of $\N_i$ or $\N_i'$ might grow exponentially with $i$. 
Indeed, the shortcut rule can increase the number of outcomes.
Consider the degenerate negotiation $\N$ with only one agent shown in Figure \ref{fig:oneagent}(a). 
\begin{figure}[h]
\centerline{\scalebox{0.30}{\input{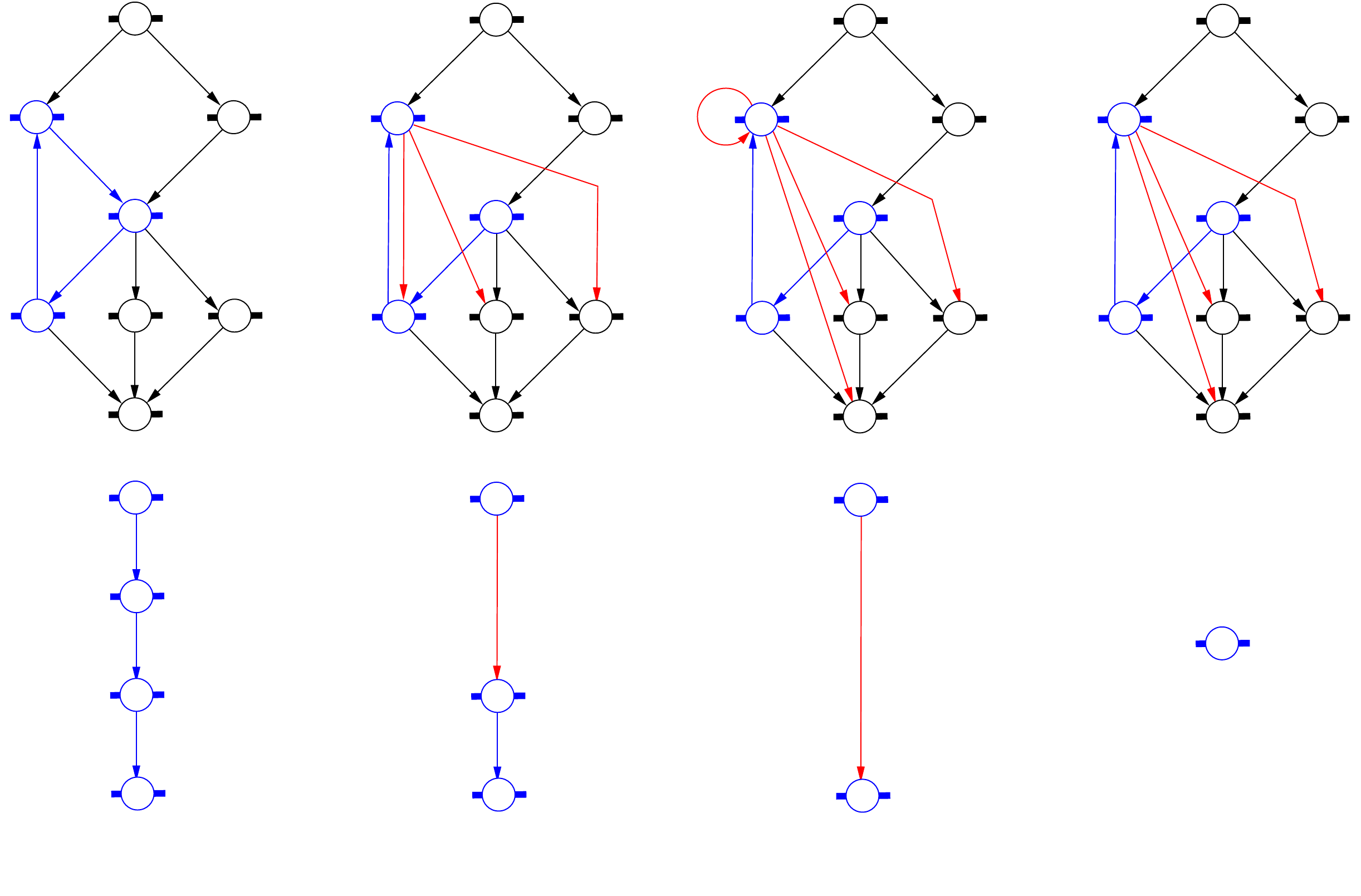_t}}}
\caption{Reducing an SND with one agent}
\label{fig:oneagent}
\end{figure}
${\cal N}$ has one single loop, namely  
$(n_1, a) \, (n_3, a) \, (n_4, b)$. The fragment $\F_{n_1}$ is shown 
in blue, and  ${\cal N}_{n_1}$ is shown below $\N$. 
The negotiation ${\cal N}_{n_1}$ can be summarized by means of three aplications of the shortcut rule,
shown in the lower row of the figure. The upper row
shows the result of application of the same rules to ${\cal N}$. 

The first application removes $n_3$ from ${\cal N}_{n_1}$ 
but not from ${\cal N}$, because $n_3$ has more than one input arc in $\N$
(Figure \ref{fig:oneagent}(b)). Moreover, 
the rule adds three outcomes to $\N$, shown in red. 
The second application removes $n_4$ from ${\cal N}_{n_1}$ 
but not from ${\cal N}$, and adds two new outcomes $(n_1, a_4)$ and $(n_1, a_5)$
(Figure \ref{fig:oneagent}(c)). The third application removes $n_1'$ from ${\cal N}_{n_1}$; in ${\cal N}$ it is replaced by an application of the iteration rule, yielding the negotiation at the top of Figure \ref{fig:oneagent}(d), which has two 
outcomes more than the initial one.

To solve this problem we introduce {\em targets} and {\em exits}.

\subsection{Sources, targets, and exits}
\label{subsec:targets}

\begin{definition}
Let $\N = (N, n_0, n_f, \next)$ be a negotiation, and let $(n,r)$ be an outcome.
The source of $(n, r)$ is $n$. The {\em target} of $(n,r)$ is the 
partial function $\agents \rightarrow N$ that assigns to every party $a \in P_n$ the atom $\next(n,a,r)$, and is undefined for every $a \in \agents \setminus P_n$. The set of {\em targets of $\N$}, denoted by $\Ta(\N)$,
contains the targets of all outcomes of $\N$.
\end{definition}

Consider the reduction process from $\N_i$ to $\N_{i+1}$. It proceeds by 
applying to $\N_i$ the same sequence of rules that summarizes an acyclic  negotiation $\N_s$. 
This sequence progressively reduces the fragment $\F_s$ until it consists of self-loops
on the atom $s$, which can then be reduced by the iteration rule. However, the
sequence also produces new outcomes of $s$ that leave $\F_s$, and which become
outcomes of $\N_{i+1}$ not present in $\N_i$. Consider for instance Figure \ref{fig:exits}(a), which sketches an application of the shorcut rule. The outcome  
$(n, r)$ unconditionally enables $n'$, whose outcome $(n', r')$ makes the left agent leave $\F_s$. The target of $(n, r_f')$ assigns the agents of the negotiations to atoms
$n_1$, $ n_2$ and $ n_3$, respectively. This target is different from the targets of the other atoms in the figure. 

\begin{figure}[htb]
\centerline{\scalebox{0.35}{\input{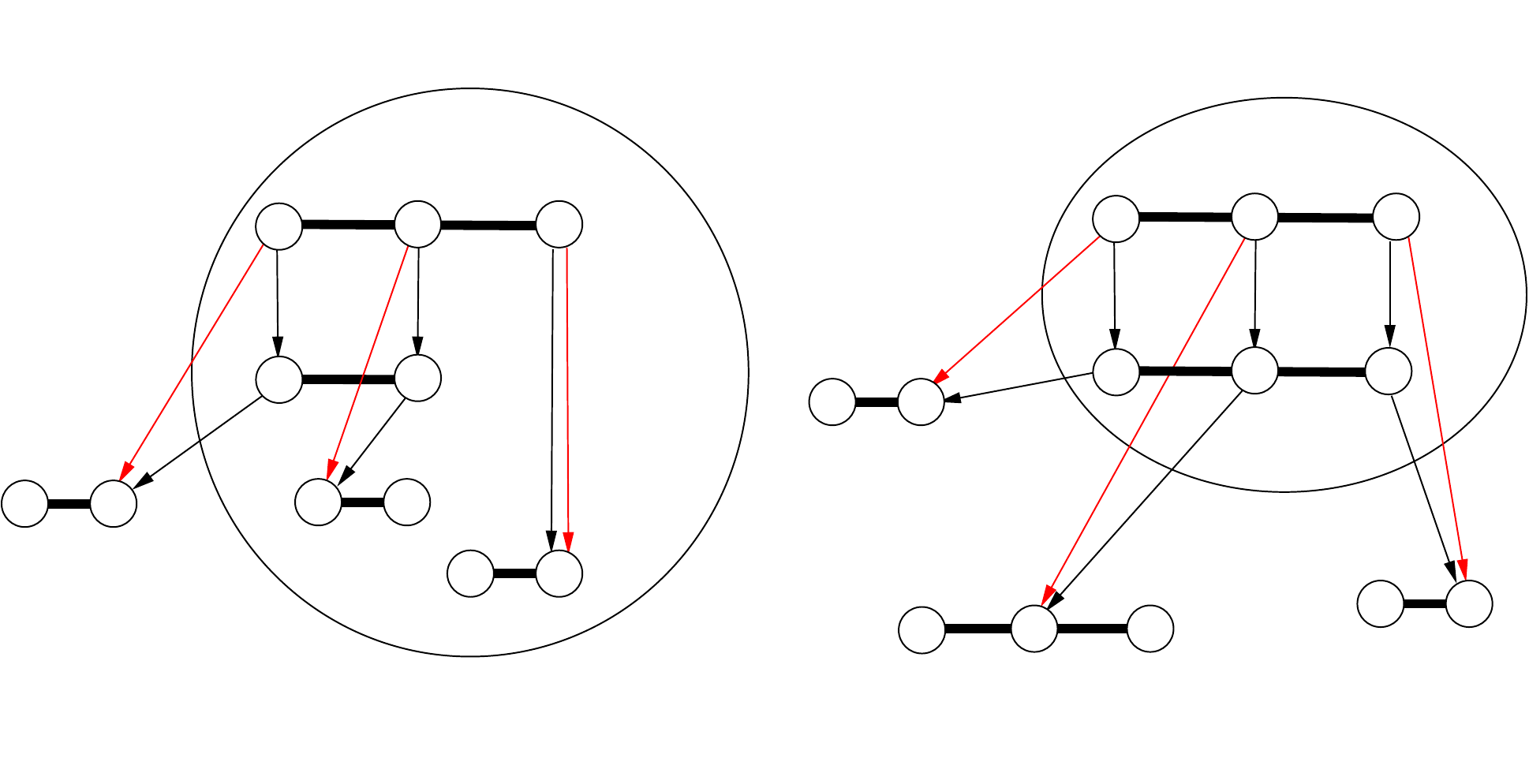_t}}}
\caption{Exits of SNDs}
\label{fig:exits}
\end{figure}

We investigate the sources and targets of outcomes that leave $\F_s$. We call them {\em exits of $\F_s$}.

\begin{definition}
Let $\F_s$ be a fragment of $\N$. An {\em exit} of $\F_s$ is an outcome 
$(n, r) \in \Out(\N)$ such that $n \in F_s$ but $(n, r) \notin \Out(\F_s)$.
\end{definition}

The following lemma presents a key property of the exits of fragments of SDNs:
the occurrence of an exit $(n, r)$ of $\F_s$ forces all agents of $P_s$ to leave the fragment $\F_s$. In other words: all agents of $P_s$ are parties of $n$, and 
the occurrence of $(n,r)$ does not lead any agent back to an atom of $\F_s$.

\begin{lemma}
Let $\F_s$ be a fragment of a SDN $\N$, and let $(e, r_e)$ be an exit of $\F_s$. 
Then $e$ has the same agents as $s$  (i.e., $e$ is also a synchronizer of $\F_s$), 
and $\next(e,a,r_e) \notin F_s$ for every agent $a$ of $e$. 
\label{lem:exit}
\end{lemma}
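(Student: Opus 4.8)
\emph{Approach and reduction.} The plan is to prove the lemma by a priority-occurrence-sequence argument in the spirit of the proof of Lemma~\ref{lem:struct2}: from the hypothesis that some agent of $P_s$ fails to leave $F_s$ when $(e,r_e)$ occurs, I want to exhibit a loop of $\N$ that passes through $(e,r_e)$ and is synchronized by $s$, contradicting $(e,r_e)\notin\Out(\F_s)$. First the reduction. Since $e\in F_s$, the atom $e$ occurs in some loop $\rho$ synchronized by $s$; hence $s$ occurs in $\rho$, every atom of $\rho$ has its parties inside $P_s$, and $P_s=\bigcup\{P_n:n\text{ occurs in }\rho\}$. Rotating $\rho$ so that it begins with $s$, soundness yields a reachable marking $\vec{w}$ with $\vec{w}(a)=s$ for all $a\in P_s$ and, from $\vec{w}$, an occurrence sequence $\rho'$ that uses only atoms of $F_s$ and reaches a marking $\vx$ enabling $e$; then $\vx(a)\in F_s$ for all $a\in P_s$. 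Put $\vx\by{(e,r_e)}\vx'$. It suffices to prove $\vx'(a)\notin F_s$ for all $a\in P_s$: this gives $\next(e,a,r_e)\notin F_s$ for every party $a$ of $e$; it gives $P_e=P_s$, since an agent $a\in P_s\setminus P_e$ would keep $\vx'(a)=\vx(a)\in F_s$; and once $P_e=P_s$, all atoms of $\rho$ having parties inside $P_s=P_e$ makes $e$ a synchronizer of $\rho$, hence of $\F_s$.

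\emph{The behavioural core.} Suppose, for contradiction, that some $\hat{a}\in P_s$ has $\vx'(\hat{a})\in F_s$. From $\vx'$ I would construct an occurrence sequence $\nu$ that brings every token of $P_s$ back onto $s$ while using only atoms with parties inside $P_s$: whenever an atom of $F_s$ is enabled, fire it with an outcome belonging to some loop synchronized by $s$ --- such an outcome exists, sends each of its parties to the next atom of that loop (so keeps them inside $F_s$), and can be chosen so as to steer the $P_s$-agents still inside $F_s$ towards $s$; and otherwise insert a minimal, soundness-guaranteed sequence that re-enables an atom of $F_s$ or brings the construction to a marking enabling $s$. If $\nu$ can be completed with $\vx'\by{\nu}\vec{w}$, then $\rho'\,(e,r_e)\,\nu$ is an occurrence sequence from $\vec{w}$ back to $\vec{w}$ --- the agents outside $P_s$ are untouched throughout, as $(e,r_e)$ and all atoms used have parties inside $P_s$ --- and it passes through $s$ and through $(e,r_e)$ with every atom having parties inside $P_s$; that is, it is a loop synchronized by $s$ containing $(e,r_e)$, contradicting that $(e,r_e)$ is an exit.

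\emph{Main obstacle.} The hard part is completing $\nu$. The agents of $P_s$ that \emph{do} leave $F_s$ at $(e,r_e)$ are parked on atoms outside $F_s$ which may a priori have parties outside $P_s$, and then they cannot be moved using only atoms with parties inside $P_s$; moreover a run confined to atoms of $F_s$ need not terminate, so one cannot simply keep firing $F_s$-moves. Overcoming this needs the structural analysis of exits and targets, a minimality choice for the loop $\rho$ through $e$, and the finiteness of the set of reachable markings, combined so as to guarantee that every $P_s$-token really can be driven back to $s$ within the world of atoms with parties inside $P_s$. Getting this bookkeeping right is the heart of the proof.
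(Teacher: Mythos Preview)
Your high-level strategy is the right one and matches the paper's: assume a ``home agent'' $\hat a\in P_s$ stays in $F_s$ after $(e,r_e)$, and derive a loop through $(e,r_e)$ synchronized by $s$, contradicting that $(e,r_e)$ is an exit. Your reduction paragraph is fine. But the behavioural core is not a proof; you yourself flag the obstacle and then stop. The specific construction you outline for $\nu$ cannot be completed as written, for exactly the reason you name: agents of $P_s$ that \emph{do} leave $F_s$ at $(e,r_e)$ may sit on atoms having parties outside $P_s$, so you cannot move them while keeping the agents outside $P_s$ untouched, and hence you cannot return to the particular marking $\vec w$.

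The paper avoids this by two changes of viewpoint that you are missing. First, it does not try to return to $\vec w$; it builds an \emph{infinite} occurrence sequence $\tau$ and then extracts a loop by finiteness of reachable markings. Second, it handles external agents explicitly rather than forbidding them: it fixes a \emph{preferred outcome} function $p$ on atoms, chosen so that (i) on atoms with an external party, following $p$ drives some external agent monotonically toward $n_f$ along an acyclic graph, and (ii) on atoms of $F_s$, following $p$ drives some internal agent toward $s$. The sequence $\tau$ is then: whenever the fixed $F_s$-segment $\sigma_e$ ending in $(e,r_e)$ is enabled, fire it; otherwise fire any enabled atom with its preferred outcome. The home agent $\hat a$ stays in $F_s$ forever (so $n_f$ never fires and $\tau$ is infinite), the acyclicity of $p$ on external atoms forces external atoms to occur only finitely often, and the structure of $p$ on $F_s$ forces $s$ to recur infinitely often in the internal tail. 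Hence $\sigma_e$, and in particular $(e,r_e)$, recurs infinitely often, and pigeonhole gives a loop synchronized by $s$ containing $(e,r_e)$.

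So the missing ingredient is precisely the preferred-outcome machinery and the switch from ``return to $\vec w$'' to ``infinite run plus pigeonhole''; without it your construction of $\nu$ has no reason to terminate or to leave external agents fixed.
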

\begin{proof}
We proceed indirectly and assume that either $P_e \subset P_s$ 
($P_e \subseteq P_s$ by the definition of fragment) or 
$\next(e,a,r_e) \in F_s$ for some $a\in P_e$. Then
at least one agent $h \in P_s$ satisfies 
either $h \notin P_e$ or $\next(e,h,r_e) \in F_s$. We call 
$h$ a {\em home agent} (intuitively, an agent that does not 
leave ``home'', i.e., $\F_s$,  by the occurrence of the exit). 
We show that the existence of $h$ leads to a contradiction.

We partition the set of $\agents$ of agents into internal agents, the agents
of $P_s$, and external agents, the agents of $\agents \setminus P_s$. 
We also partition the set of atoms: an atom is internal if it has only 
internal parties, otherwise it is external. Clearly all atoms of $F_s$ 
are internal, but there can also be internal atoms outside $F_s$. 
If $P_s$ contains all agents of the negotiation, then all agents are internal,
and so are all atoms (also the final atom $n_f$). 
Otherwise at least $n_f$ has an external party and is hence an external atom.

Next we define a function $p \colon N \to Out (\N)$ that assigns to each atom one of 
its outcomes (the {\em preferred} outcome). $p$ is defined for internal and external 
atoms separately, i.e., it is the union of functions $p_i$ 
assigning outcomes to internal atoms, and $p_e$ assigning outcomes to external atoms.

If there are external atoms, and hence $n_f$ is external, $p_e$ is defined as follows. 
First we set $p_e(n_f)$ to an arbitrary outcome of $n_f$. 
Then we proceed iteratively: If some external atom $n$ has an outcome $r$ and an external agent 
$a$ such that $p_e(\next(n,a,r))$ is defined, then set $p_e(n):=r$ (if there are several possibilities, 
we choose one of them arbitrarily).
At the end of the procedure $p_e$ is defined for every external atom, 
because each external atom $n$ has an external agent, say $a$, and, 
since $a$ participates in $n_f$,
the graph of $\N$ has a path of atoms, all of them with $a$ as party, 
leading from $n$
to $n_f$. 

Now we define $p_i$ for internal atoms. 
For the internal atoms $n$ not in $\F_s$ we define $p_i(n)$
arbitrarily. For the internal atoms $n \in F_s$ such that $\next(n,a,r)=s$ 
for some agent $a$ we set $p_i(n)=r$. For the rest of the internal atoms of $F_s$
we proceed iteratively. If $n \in F_s$ has an outcome $r$ and an agent $a$ 
(necessarily internal) such that $p_i(\next(n,a,r))$ is defined, 
then we set $p_i(n,a):=r$ (if there are several 
possibilities, we choose one of them). By Lemma \ref{lem:struct2}(2), the graph
of $\F_s$ is strongly connected, and so eventually $p_i$ is defined for 
all atoms of $\F_s$.

Let $\sigma$ be an arbitrary occurrence sequence leading to a marking $\vx_s$
that enables $s$ (remember that $\N$ is sound). By the definition
of $\F_s$, the marking $\vx_s$ enables an occurrence sequence $\sigma_e$ 
that starts with an occurrence of $s$, contains only atoms of
$F_s$, and ends with an occurrence of $(e, r_e)$, the considered 
exit of $\F_s$.

We now define a  {\em maximal} occurrence sequence $\tau$ enabled at $\vx_s$.
We start with $\tau:=\epsilon$ and while $\tau$ enables some atom
proceed iteratively as follows:
\begin{itemize}
\item If $\tau$ enables $\sigma_e$, then $\tau:=\tau\sigma_e$, i.e., we 
extend the current sequence with $\sigma_e$.
\item Otherwise, choose 
any enabled atom $n$, and set $\tau:= \tau (n,p(n))$, i.e., 
we extend the current sequence with  $(n,p(n)$.
\end{itemize} 

We first show that $\tau$ is infinite, i.e., that we never exit the while loop.
By soundness, there is always an 
enabled atom as long as the final marking is not reached, i.e., as long 
as at least one agent is ready to engage in an atom. 
So it suffices to show that this is the case. We prove that the home agent 
$h$ is ready to engage in an atom after the occurrence of an arbitrary finite prefix of $\tau$.
This result follows from the following claim.\\

\noindent
{\em Claim.}
If $\vx_s \by{\tau'} \vx'$ for some prefix $\tau'$  of $\tau$ then $\vx'(h) \in F_s$, i.e., 
the home agent $h$ only participates in atoms of the fragment and is always only ready
to participate in atoms of the fragment.\\

\noindent
{\em Proof of claim.}
The proof follows the iterative construction of $\tau$. We start at marking $\vx_s$,
and we have $\vx_s(h)=s$ because $h$ is a party of $s$ and $\vx_s$ enables $s$.

Whenever $\sigma_e$ or a prefix of $\sigma_e$ occurs, the property is preserved, 
because first, $\next (n,h,r) \in F_s$ holds for all outcomes $(n,r)$ of $\sigma_e$
except the last one (this holds for all parties of $n$); and second,
for the last outcome, which is $(e,r_e)$, $h$ is either not party of $e$ whence the marking
of $h$ does not change, or $\next (e, h, r_e) \in F_s$ by definition of $h$.

Whenever an outcome $(n, p(n))$ occurs, either $h$ is not a party of $n$, and then the marking of 
$h$ does not change, or $h$ is a party of $n$, and $n$ is an atom of $F_s$. 
By construction of $p$ (actually, of $p_i$), the property is preserved, which finishes the 
proof of the claim.\\

Let us now investigate the occurrences of external and internal atoms in $\tau$.
Let $G_E$ be the graph with the external atoms as nodes and an edge
from $n$ to $n'$ if $p_e(n) = n'$. By the definition of $p_e$, the graph $G_E$ is acyclic with
$n_f$ as sink. By the definition of $\tau$, after an external atom $n$ occurs in $\tau$, 
none of its predecessors in $G_E$ can occur in $\tau$. So $\tau$ contains only finitely many occurrences 
of external atoms. 

Since $\tau$ is infinite, it therefore has an infinite suffix $\tau'$ in which only internal
atoms occur. Since $s$ is a synchronizer with a minimal set of parties, every internal agent
participates in infinitely many outcomes of $\tau'$, in particular the home agent $h$. 
By the claim, $\tau'$ contains infinitely many occurrences of atoms of $\F_s$. 

Now let $G_s$ be the graph with the atoms of $F_s$ as nodes, and an edge  
from $n$ to $n'$ if $p_i(n) = n'$. By the definition of $p_i$, every cycle of the graph $G_s$ goes through
the synchronizer $s$. So $\tau'$ contains infinitely many occurrences of $s$.
Whenever $s$ is enabled, $\sigma_e$ is enabled, too, and actually occurs by the definition of $\tau$.
Since $\sigma_e$ ends with the outcome $(e, r_e)$,
 $\tau'$  also contains infinitely many occurrences of $(e, r_e)$. 
Since negotiations have finitely many reachable markings, $\tau'$ contains a loop synchronized by $s$
(by minimality of the synchronizer) and containing $(e,r_e)$. However, by the definition of a fragment
this implies that this loop and thus $(e,r_e)$ belongs to $\F_s$ as well, contradicting that $(e,r_e)$ is an
exit of $\F_s$.
\qed
\end{proof}

In particular, the situation of Figure \ref{fig:exits}(a) cannot occur, and so in 
SDNs the correct picture for the application of the shorcut rule to exits
is the one of Figure \ref{fig:exits}(b): the exit $n'$ has the same agents as the 
synchronizer $s$. Moreover, the new target of $(s, r'_{f})$ equals the 
already existing target of $(n',r')$.
So Lemma \ref{lem:exit} leads to the following bound on the number of targets of $\N_i$:

\begin{lemma}
For every $1 \leq i \leq |N_0|$: $\Ta(\N_{i}) \subseteq \Ta(N_0)$.
\label{lem:targets}
 \end{lemma}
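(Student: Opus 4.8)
The plan is to proceed by induction on $i$, the crucial point being that a single pass through the body of the while loop cannot enlarge the set of targets: it is enough to prove $\Ta(\N_{i+1})\subseteq\Ta(\N_i)$, since $\N_1$ is the merge-closure of $\N_0$ and each application of the merge rule replaces two outcomes $r_1,r_2$ of an atom $n$ by a fresh outcome $r_f$ with $\next(n,a,r_f)=\next(n,a,r_1)$ for all $a\in P_n$, so the target of $(n,r_f)$ is the target of $(n,r_1)$ and the set of targets can only shrink; hence $\Ta(\N_1)\subseteq\Ta(\N_0)$. The same remark handles the auxiliary merge applications of line \ref{me} and the iteration rule of line \ref{it}: the latter only deletes the (unique) self-loop outcome of $s$ and renames the remaining outcomes of $s$ without touching $\next$, hence without touching their targets. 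So everything reduces to the claim that the sequence of merge and shortcut rules of line \ref{ac}, which summarizes the acyclic SDN $\N_s$, creates no outcome whose target is new and which survives into $\N_{i+1}$.

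The only rule in line \ref{ac} that can in principle create a new target is the shortcut rule: when $(n,r)$ unconditionally enables $n'$, the new outcomes are the $(n,r'_f)$ for $r'\in\outc_{n'}$, with $\next(n,a,r'_f)=\next(n',a,r')$ for $a\in P_{n'}$ and $\next(n,a,r'_f)=\next(n,a,r)$ for $a\in P_n\setminus P_{n'}$. Thus the target of $(n,r'_f)$ is in general a ``mix'' of the target of $(n',r')$ on $P_{n'}$ and the target of $(n,r)$ on $P_n\setminus P_{n'}$, and such a mix need not be a target of $\N_0$. This is exactly where Lemma \ref{lem:exit} is decisive. Throughout line \ref{ac} the shortcut rule is applied with $n,n'$ atoms of (the current reduct of) the fragment $\F_s$; if the outcome $r'$ of $n'$ used in the shortcut is an exit of that fragment, then Lemma \ref{lem:exit} gives $P_{n'}=P_s$, and since $n$ is a fragment atom we have $P_{n'}\subseteq P_n\subseteq P_s$, whence $P_n=P_{n'}=P_s$ and $P_n\setminus P_{n'}=\emptyset$. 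So in that case there is no mixing: the target of $(n,r'_f)$ coincides with the target of $(n',r')$, which is the target of an outcome already present before this shortcut; and $(n,r'_f)$ is again an exit. By an inner induction on the number of rule applications in line \ref{ac}, the target of every exit encountered lies in $\Ta(\N_i)$.

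It remains to dispose of the ``mixed'' targets produced while $r'$ is still an internal outcome of the fragment. Since the rule sequence summarizes $\N_s$ and the reduction of $\N_s$ contracts every atom of $\F_s$ into $s_\L=n_{s0}$, every internal outcome created during line \ref{ac} is subsequently consumed by a later shortcut (with its source playing the role of the shortcut target), and the single self-loop outcome of $s$ that remains at the end of line \ref{ac} is deleted by the iteration rule of line \ref{it}; hence no internal outcome created in line \ref{ac} survives into $\N_{i+1}$. Therefore every outcome of $\N_{i+1}$ is a (renamed) outcome of $\N_i$ or an exit created in line \ref{ac}, and in both cases its target is in $\Ta(\N_i)\subseteq\Ta(\N_0)$, which completes the induction.

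The step I expect to cost the most effort is the last one: making precise that the imported rule sequence genuinely keeps acting on successive reducts of $\F_s$, that ``exit'' is the right invariant to track across the intermediate negotiations (whose fragments shrink at every step), and above all that no internal outcome can linger into $\N_{i+1}$ --- the subtlety being the behaviour of Figure \ref{fig:oneagent}, where an atom of $\F_s$ is not removed because it has incoming arcs from outside $\F_s$, so one must check carefully that the outcomes left on such a surviving atom are only exits (whose mixed targets collapse by Lemma \ref{lem:exit}) and never leftover internal outcomes with genuinely new targets. Once this is settled, the per-rule target bookkeeping is the routine computation sketched above.
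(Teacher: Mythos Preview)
Your proposal is correct and follows the same approach as the paper: reduce to $\Ta(\N_{i+1})\subseteq\Ta(\N_i)$, note that merge and iteration cannot create new targets, and handle shortcuts via Lemma~\ref{lem:exit}, which forces $P_n=P_{n'}=P_s$ whenever the shortcut's second outcome $(n',r')$ is an exit (so the new outcome inherits the target of $(n',r')$ verbatim), while shortcuts through internal outcomes produce internal outcomes that are consumed before reaching $\N_{i+1}$. The paper organizes the same argument backward---fix a surviving outcome, locate the shortcut that created it, and prove three claims about that shortcut, followed by an inner induction on the number of prior shortcuts---and the subtlety you flag about applying Lemma~\ref{lem:exit} to the intermediate fragments is present (and handled somewhat informally) there as well.
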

\begin{proof}
It suffices to prove $\Ta(\N_{i+1}) \subseteq \Ta(\N_i)$ for $i < |N_0|$. Let $(n, r)$ be an arbitrary outcome of $\N_{i+1}$.
We show that there exists an outcome $(n',r')$ of $N_i$ such that $(n,r)$ and $(n',r')$ have the same targets. 


 If $(n,r)$ is also an outcome of $N_i$, 
then we are done. So assume this is not the case.
Then $(n, r)$ is generated by a particular application of the shortcut rule during the reduction
process leading from $\N_i$ to $\N_{i+1}$. Let $\N'$ and $\N''$ be the negotiations right
before and after this application of the rule. $\N'$ contains a fragment $\F_s'$
obtained by applying to $\F_s$ the same sequence of rules leading from $\N_i$ to
$\N'$. Similarly, $\N''$ contains a fragment $\F_s''$.

By the definition of the shortcut rule, $\N'$ has an outcome $(n_1, r_1)$ such that
$n_1 $ is an atom of $\F_s'$ and $(n_1, r_1)$ unconditionally enables another atom $n_2$ of $\F_s'$. Moreover,
 $(n, r)$ is the shortcut of $(n_1, r_1)$ and $(n_2, r_2)$, i.e., $(n,r)$ is 
obtained from clause (2) in Definition \ref{def:shortcutrule}. 

We prove the following three claims:

\begin{itemize}
\item[(1)] $(n_1, r_1)$ is an outcome of $\F_s'$, i.e.,  $\next(n_1, a, r_1) \in F_s'$
for every party $a$ of $n_1$. \\
Assume the contrary. Then, since $n_1 \in F_s'$, $(n_1, r_1)$ is an exit
of $\F_s'$, by Lemma \ref{lem:exit} we have $\next(n_1, a, r_1) \notin F_s'$ for every
party $a$ of $n_1$, contradicting that $(n_1, r_1)$ unconditionally enables an atom of $\F_s'$.
\item[(2)] $(n_2, r_2)$ is an exit of $\F_s'$. \\
Assume the contrary, i.e., $(n_2, r_2) \in \F_s'$.
By (1), both $(n_1, r_1)$ and $(n_2, r_2)$ are outcomes of $\F_s'$, 
and so $(n,r)$ is an outcome of $\F_s''$. But then, since $\F_s''$ is completely reduced
by the reductions leading from $\N''$ to $\N_{i+1}$, the outcome $(n, r)$ is removed by 
some rule in the reduction path between $\N''$ and $\N_{i+1}$, contradicting our
assumption that $(n, r)$ is an outcome of $\N_{i+1}$. 
\item[(3)] $(n,r)$ and $(n_2, r_2)$ have the same target.\\
By (2) and Lemma \ref{lem:exit}, $n_2$ has exactly the same parties as the synchronizer $s$. Since
$(n_1,r_1)$ unconditionally enables $n_2$, the same holds for $n_1$. So we have $P_{n_1} = P_{n_2} = P_s$
and $\next(n_1, a, r_1) = n_2$ for every $a \in P_{n_2}$. 
By the definition of the shortcut rule, $\next(n,a,r) =\next(n_2, a, r_2)$ for every $a \in P_{n_2}$, and we are done.
\end{itemize}

To finally prove that $(n,r)$ has the same target as some outcome of $\N_i$ 
we proceed by induction on the number $k$ of times the shortcut rule has
been applied between $\N_i$ and $\N'$. If $k=0$, then $(n_2, r_2)$ is an outcome of 
$\N_i$, and by (3) we are done. If $k > 0$, then either $(n_2, r_2)$ is an outcome of $\N_i$,
and by (3) we are done, or it is produced by a former application of the shortcut rule. In this case,
by induction hypothesis, $(n_2, r_2)$ has  the same target in $\N_i$ and therefore, by
(3), so has $(n,r)$. \qed
\end{proof}

We use this lemma to bound ${\it  Out}(N_i')$.

\begin{lemma}
\label{lem:bounds}
For every $1 \leq i \leq |N_0|$:  $|\Out(N_i')| \in {\cal O}( |N_0|^2 \cdot |\Out(N_0)|)$.
\end{lemma}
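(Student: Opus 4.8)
The plan is to bound $|\Out(\N_i')|$ by separately bounding the outcomes that survive from earlier negotiations and the outcomes freshly created during the $i$-th iteration. The key observation is that an outcome is determined by its source atom together with its target (as a partial function $\agents \to N$), so $|\Out(\N_i')| \le |N_i'| \cdot |\Ta(\N_i')|$, and Lemma~\ref{lem:targets} already gives $|\Ta(\N_i')| \le |\Ta(\N_0)|$. It remains to observe $|\Ta(\N_0)| \le |\Out(\N_0)|$ (each outcome contributes one target) and $|N_i'| \le |N_0|$ by Lemma~\ref{lem:basics}(a). However, this naive counting is not quite enough, because distinct outcomes can share the same source and target only if their transformers differ; the point of introducing targets was precisely that \emph{the exits} — the outcomes responsible for the exponential blow-up in Figure~\ref{fig:oneagent} — do not increase the target set, so I must argue that the number of outcomes with a given source and target stays under control.

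\textbf{First} I would make precise the decomposition of $\Out(\N_i')$. Recall $\N_i'$ is obtained from $\N_i$ by running the acyclic summarization sequence for $\N_s$ on $\N_i$ (line~\ref{ac}), then one iteration-rule application (line~\ref{it}). The merge rule of line~\ref{me} is applied \emph{after} $\N_i'$, so it does not concern us here. I split the outcomes of $\N_i'$ into: (i) outcomes already present in $\N_i$ and not destroyed; (ii) outcomes created by shortcut-rule applications in line~\ref{ac}; and (iii) the outcomes of $s$ rewritten by the iteration rule in line~\ref{it}, which only changes transformers, not targets, and does not increase their number. For group (i), $|\Out(\N_i)|$ is itself bounded using the same argument applied inductively, or more cleanly, directly by $|N_i|\cdot|\Ta(\N_i)| \le |N_0|\cdot|\Out(\N_0)|$ via Lemma~\ref{lem:targets}. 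For group (ii), I use the structural dichotomy from the proof of Lemma~\ref{lem:targets}: a shortcut-created outcome $(n,r)$ is either an outcome of the current fragment $\F_s'$ (and hence gets removed before $\N_{i+1}$, so in particular it lives only transiently — but it still must be counted in $\N_i'$ if it survives to that point), or it is an exit, in which case by Lemma~\ref{lem:exit} its source has parties exactly $P_s$ and its target coincides with an already-existing target.

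\textbf{The cleanest route} is to bound directly: every outcome of $\N_i'$ has a source in $N_i' \subseteq N_0$ and, by the target analysis, a target in $\Ta(\N_0)$, so there are at most $|N_0|\cdot|\Ta(\N_0)| \le |N_0|\cdot|\Out(\N_0)|$ outcomes \emph{if} source-and-target determine the outcome. When they do not, the extra outcomes with coinciding source and target are exactly the self-loop outcomes on $s$ produced during the acyclic reduction of $\F_s$, plus outcomes arising from the merge rule — but merge only occurs after $\N_i'$. So within one iteration, the only way to get two distinct outcomes with the same source and target is through the shortcut rule composing different paths inside $\F_s$; by Theorem~\ref{thm:polcomp} the acyclic summarization of $\N_s$ uses at most $|N_s|^2 + |\Out(\N_s)| \le |N_0|^2 + |\Out(\N_i)|$ rule applications, each creating at most a bounded number of new outcomes per existing outcome of $n'$, which gives the extra factor of $|N_0|$. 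Combining, $|\Out(\N_i')| \le |N_0|\cdot|\Ta(\N_0)| + (\text{new outcomes in line~\ref{ac}}) \in \mathcal{O}(|N_0|^2\cdot|\Out(\N_0)|)$.

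\textbf{The main obstacle} I expect is making the accounting of group (ii) rigorous without double-counting: I need a clean invariant of the form ``at every intermediate negotiation $\N'$ between $\N_i$ and $\N_i'$, the number of outcomes is at most $|N_0|\cdot|\Ta(\N_0)| + c\cdot|N_0|$ for a constant $c$,'' proved by induction on the reduction steps, using Lemma~\ref{lem:exit} to control exits and noting that non-exit outcomes of $\F_s'$ are eventually all consumed so they can be amortized against the shortcut-rule count from Theorem~\ref{thm:polcomp}. Once that invariant holds at $\N' = \N_i'$ (the iteration rule of line~\ref{it} only lowers or preserves the outcome count), the stated $\mathcal{O}(|N_0|^2\cdot|\Out(N_0)|)$ bound follows since $|\Ta(\N_0)| \le |\Out(N_0)|$.
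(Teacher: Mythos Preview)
Your proposal circles the right ideas but misses the two clean observations that make the paper's proof a three-liner.

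First, for bounding $|\Out(\N_i)|$ you write that ``an outcome is determined by its source atom together with its target'' and then immediately retract this because distinct outcomes can share source and target. The point you are missing is Lemma~\ref{lem:basics}(b): the merge rule cannot be applied to $\N_i$, and the merge rule is \emph{precisely} the rule that fires when two outcomes share source and target. Hence in $\N_i$ (though not in $\N_i'$) source and target \emph{do} determine the outcome, giving $|\Out(\N_i)|\le |N_i|\cdot|\Ta(\N_i)|\le |N_0|\cdot|\Out(\N_0)|$ directly via Lemma~\ref{lem:targets}. No invariant, no induction on intermediate negotiations.

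Second, for the new outcomes in $\Out(\N_i')\setminus\Out(\N_i)$ you attempt to bound them by the number of shortcut applications from Theorem~\ref{thm:polcomp}, which is indirect and leaves you with a hand-wavy ``extra factor of $|N_0|$''. The paper instead observes (reusing the analysis behind Lemma~\ref{lem:targets}) that every such new outcome has its source among the atoms of $\F_s$ and is generated by some exit of $\F_s$; hence there are at most $|F_s|\cdot(\text{number of exits of }\F_s)\le |N_i|\cdot|\Out(\N_i)|$ of them. Adding this to $|\Out(\N_i)|$ and plugging in the bound from the first paragraph yields $|\Out(\N_i')|\le (1+|N_0|)\cdot|N_0|\cdot|\Out(\N_0)|$, which is the claimed $\mathcal{O}(|N_0|^2\cdot|\Out(\N_0)|)$.

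So your decomposition into groups (i)--(iii) and your proposed inductive invariant are unnecessary detours; once you invoke Lemma~\ref{lem:basics}(b) for $\N_i$ and count new outcomes as (source in $\F_s$) $\times$ (exit of $\F_s$), the proof is immediate.
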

\begin{proof}
We first give an upper bound for $|\Out(N_i)|$.
Since the merge rule cannot be applied to $\N_i$, no two outcomes of $\N_i$ have the same source and the same target,
and so $|\Out(N_i)| \leq |N_i| \cdot |\Ta(N_i)|$. By Lemma \ref{lem:targets},
$|\Out(N_i)| \leq |N_0| \cdot |\Out(N_0)|$.

Now we consider $|\Out(N_i')|$. Each outcome of $\Out(\N_i') \setminus \Out(\N_i)$ has some atom of $\F_s$ as source, and is generated by some exit of $\F_s$. So the number of such outcomes is at most the product of the numbers of nodes of $\F_s$ and the number of exits of $\F_s$. Since these numbers are bounded by $|N_i|$ and
$|\Out(\N_i)|$, respectively, we get $ |\Out(N_i')| \leq |\Out(N_i)|+ |N_i|\cdot |\Out(\N_i)|$. The result now follows from 
$|\Out(N_i)| \leq |N_0| \cdot |\Out(N_0)|$ and Lemma \ref{lem:basics}(a).\qed
\end{proof}

Finally, combining Lemma \ref{lem:firstbound} and Lemma \ref{lem:bounds} we get

\begin{theorem}
Let $\N_0$ be an SDN. Then ${\it Appl}(\N_0)  \in   {\cal O} ( \; |N_0|^4 \cdot \Out(\N_0) \;)$.
\end{theorem}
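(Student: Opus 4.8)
The statement to prove is the final complexity bound ${\it Appl}(\N_0) \in {\cal O}(|N_0|^4 \cdot \Out(\N_0))$, and the plan is simply to combine the two lemmas immediately preceding it, substituting the bound of Lemma~\ref{lem:bounds} into the expression of Lemma~\ref{lem:firstbound}. First I would recall from Lemma~\ref{lem:firstbound} that
\[
{\it Appl}(\N_0) \in {\cal O}\Big( |N_0|^3 + |N_0| \sum_{i=1}^{|N_0|} \big(|\Out(\N_i)| + |\Out(\N_i')|\big) \Big).
\]
Then I would invoke the two bounds established in the proof of Lemma~\ref{lem:bounds}: $|\Out(N_i)| \leq |N_0| \cdot |\Out(N_0)|$ (which follows because the merge rule cannot be applied to $\N_i$, so outcomes are determined by their source--target pair, combined with Lemma~\ref{lem:targets}), and $|\Out(N_i')| \in {\cal O}(|N_0|^2 \cdot |\Out(N_0)|)$.

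The second step is the arithmetic. Substituting these into the sum, each of the $|N_0|$ summands is ${\cal O}(|N_0|^2 \cdot |\Out(N_0)|)$, so the sum is ${\cal O}(|N_0|^3 \cdot |\Out(N_0)|)$, and multiplying by the leading factor $|N_0|$ gives ${\cal O}(|N_0|^4 \cdot |\Out(N_0)|)$. The additive term $|N_0|^3$ is absorbed since $|\Out(N_0)| \geq 1$. Concretely:
\[
{\it Appl}(\N_0) \in {\cal O}\Big( |N_0|^3 + |N_0| \cdot |N_0| \cdot |N_0|^2 \cdot |\Out(N_0)| \Big) = {\cal O}(|N_0|^4 \cdot |\Out(N_0)|).
\]

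There is essentially no obstacle here: this theorem is a one-line corollary of Lemmas~\ref{lem:firstbound} and~\ref{lem:bounds}, which have already been proved. All the genuine work — establishing that the while loop runs at most $|N_0|$ times (Lemma~\ref{lem:basics}), that targets never grow ($\Ta(\N_i) \subseteq \Ta(N_0)$, Lemma~\ref{lem:targets}, which in turn rests on the key structural fact Lemma~\ref{lem:exit} about exits of fragments being synchronizers), and the per-iteration outcome bounds — is done upstream. The only thing to be careful about is bookkeeping the exponents correctly when chaining the bounds, and noting that $|\Out(N_0)|$ dominates any additive polynomial-in-$|N_0|$ term only after we already have a factor of $|\Out(N_0)|$ present; since the main term carries such a factor, the $|N_0|^3$ summand is harmlessly absorbed.
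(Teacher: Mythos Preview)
Your proposal is correct and matches the paper's approach exactly: the paper presents this theorem without proof, simply stating that it follows by combining Lemma~\ref{lem:firstbound} and Lemma~\ref{lem:bounds}, and your write-up makes that combination explicit with the obvious arithmetic.
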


We conjecture that a more detailed complexity analysis can improve this bound
to at least ${\cal O} (|N_0|^3 \cdot \Out(\N_0))$, but this is beyond the scope of this paper.

\section{Conclusions}

We have continued the analysis of negotiations started in \cite{negI}.
We have provided a set of three reduction rules that can
summarize all and only the sound deterministic negotiations. Moreover, the 
number of rule applications is polynomial in teh size of the negotiation.
 
The completeness and polynomiality proofs turned out to be quite involved. 
At the same time, we think they provide interesting insights. In particular,
the completeness proofs shows how in deterministic negotiations
soundness requires to synchronize all agents at least once in every loop. It also
shows that, intuitively, loops must be properly nested. Intuitively, sound
deterministic negotiations are {\em necessarily} well structured, in the sense of
structured programming.

Our rules generalize the rules used to transform 
finite automata into regular expressions by eliminating states \cite{HUM}. 
Indeed, deterministic negotiations can be seen as a class of  communicating 
deterministic automata, and thus our result becomes a generalization 
of Kleene's theorem to a concurrency model. In future work we plan to investigate the 
connection to other concurrent Kleene theorems in the literature like e.g.
\cite{DBLP:conf/icalp/GastinPZ91,DBLP:conf/dlt/GenestMK04}.

\bibliographystyle{is-abbrv} 
\bibliography{references}

\setcounter{lemma}{2}

\section*{Appendix}
\subsection{Proofs of Section \ref{subsec:lassos}}
\begin{lemma}
\label{lem:struct}
\begin{itemize}
\item[(1)] Every cyclic SDN has a loop.
\item[(2)] The set of atoms of a minimal loop generates a strongly connected subgraph of the graph of the considered negotiation.
\end{itemize}
\end{lemma}
\begin{proof}
\begin{itemize}
\item[(1)]
Let $\pi$ be a cycle of the graph of the negotiation ${\cal N}$. Let $n_1$ be an arbitrary atom occurring in $\pi$, and let $n_2$ be its successor in $\pi$. $n_1 \neq n_f$ because $n_f$ has no successor, and hence no cycle contains $n_f$.

By soundness some reachable marking $\vx_1$ enables $n_1$. For at least one agent $a$ and one result $r$, $\next (n_1,a,r)$ contains $n_2$, and 
by determinism  it contains only $n_2$. Let $\vx_1 \by{(n_1,r)} \vx_1'$. 
Again by soundness, there is an occurrence sequence from $\vx_1'$ that leads to the final marking. This sequence has to contain an occurrence of $n_2$ because 
this is the only atom  agent $a$ is ready to engage in. In particular, some prefix of this sequence leads to a marking $\vx_2$ that enables $n_2$. 

Repeating this argument arbitrarily for the nodes $n_1$, $n_2$, $n_3$, \ldots , $n_k = n_1$ of the cycle $\pi$, we conclude that there is an infinite occurrence sequence, containing infinitely many occurrences of atoms of the cycle $\pi$. Since the set of reachable markings is finite, this sequence contains a loop. 
\item[(2)]
For each agent involved in any atom of the loop, consider the sequence of atoms it is involved in. By definition of the graph of the negotiation, this sequence is a path. It is moreover a (not necessarily simple) cycle, because a loop starts and ends with the same marking. So the generated subgraph is covered by cycles. It is moreover strongly connected because, for each connected component, the projection of the outcomes of the loop to those with atoms in this component is a smaller loop, against minimality of the  loop.\qed
\end{itemize}
\end{proof}

\subsection{Proofs of Section \ref{subsec:synchronizers}}
\begin{lemma}
\label{lem:struct2}
Every minimal loop of a SDN is synchronized. 
\end{lemma}
\begin{proof}

Let $\sigma$ be a minimal loop, enabled at a reachable marking $\vx$.
Define $N_\sigma$ as the set of atoms that occur in $\sigma$
and $A_\sigma$ as the set of agents involved in atoms of $N_\sigma$.
Since $\N$ is sound, there is an occurrence sequence $\sigma_f$ enabled by $\vx$ that ends with the final atom $n_f$. 

Now choose  an  agent $\hat{a}$ of $A_\sigma$ such that $\vx (\hat{a}) \in N_\sigma$.
In $\sigma_f$,  eventually $\hat{a}$ is involved in $n_f$, and it is first involved in an atom of $N_\sigma$.

Using $\sigma_f$, we
construct a path $\pi$ of the graph of $\N$ as follows:
We begin this path with the last atom $n \in N_\sigma$ that appears in $\sigma_f$ and involves agent $\hat{a}$. We call this atom $n_\pi$. Then we
repeatedly choose the last atom in $\sigma_f$ that involves $\hat{a}$ and moreover is a successor of the last vertex of the path constructed so far. 
By construction, this path has no cycles (i.e., all vertices are distinct), starts with an atom of $N_\sigma$ and has not further atoms of $N_\sigma$, ends with $n_f$, and only contains atoms involving $\hat{a}$.

Since $\vx$ enables the loop $\sigma$ and since $n_\pi \in N_\sigma$, after some prefix of $\sigma$ a marking $\vx_\pi$ is reached which enables $n_\pi$. The loop
$\sigma$ continues with some outcome $(n_\pi,r_1)$, where $r_1$ is one possible result of $n_\pi$. 

By construction of the path $\pi$, 
there is an alternative result $r_2$ of $n_\pi$ such that $\next (n_\pi,\hat{a},r_2)$ is the second atom of the path $\pi$, and this atom does not belong to $N_\sigma$. 
Let $\vx_\pi'$ be the marking reached after the occurrence of $(n_\pi,r_2)$ at
 $\vx_\pi$.

From $\vx_\pi'$,  we iteratively construct an occurrence sequence as follows:
\begin{itemize}
\item[(1)] if an atom $n$ of $N_\sigma$ is enabled and thus some $(n,r)$ occurs in $\sigma$, we continue with $(n,r)$,
\item[(2)] otherwise, if an atom $n$ of the path $\pi$ is enabled, we let this atom occur with an outcome $r$ such that $\next (n,\hat{a},r)$ is the successor atom w.r.t.\  the path $\pi$,
\item[(3)] otherwise we add a minimal occurrence sequence that either leads to the final marking or enables an atom of $\sigma$ or an atom of $\pi$, so that after this sequence one of the previous rules can be applied. Such an occurrence sequence exists because $\N$ is sound and hence the final marking can be reached.
\end{itemize}

First observe that, {\jde in the constructed sequence}, agent $\hat{a}$ will always be ready to engage only in an atom of the path $\pi$. So its token is moved along  $\pi$. Conversely, all atoms of $\pi$ involve $\hat{a}$. Therefore only finitely many atoms of $\pi$ occur in the sequence. This limits the total number of occurrences of type (2).

Agent $\hat{a}$ is no more ready to engage in any atom of $N_\sigma$ during the sequence. So at least $n_\pi$ cannot occur any more in the sequence because $\hat{a}$ is a party of $n_\pi$. By minimality of the loop $\sigma$, there is no loop with a set of atoms in $N_\sigma \setminus \{n_\pi\}$. Since the set of reachable markings is finite, there cannot be an infinite sequence of outcomes with atoms of $N_\sigma$ (type (1)) without occurrences of other atoms. 

By determinism, each agent ready to engage in an atom of $N_\sigma$ 
can only engage in this atom. So the set of these agents is only changed by occurrences of type (1). By construction, no agent ever leaves the loop after the occurrence of $(n_\pi,r_2)$, i.e.\ every agent of this set remains in this set by an occurrence of type (1). Therefore, the set of agents ready to engage in an atom of $N_\sigma$ never decreases.

For each sequence of type (3) we have three possibilities. 
\begin{itemize}
\item[(a)]
It  ends with the final marking. 
\item[(b)]
It ends with a marking that enables an atom of $\pi$ (type (2)), which then occurs next. However these atoms  can  occur only finitely often in the constructed sequence, as already mentioned.
\item[(c)]
It ends  with a marking that enables an atom of $\sigma$ (type (1)) which then occurs next. In that case the last outcome of this sequence necessarily involves an agent of $A_\sigma$, which after this occurrence is ready to engage in an atom of $N_\sigma$. So it increases the number of agents ready to engage in an atom of $N_\sigma$. Since this number never decreases, this option can also happen only finitely often.
\end{itemize}

Hence, eventually only option (a) is possible, and so the sequence  will reach the final marking.
Since the final atom involves all agents, no agent was able to remain in the loop.
In other words: all agents of $A_\sigma$ left the loop when $(n_\pi, r_2)$ has occurred.
As a consequence, all these agents are parties of $n_\pi$, and $n_\pi$ therefore is a synchronizer of the loop $\sigma$.
\qed
\end{proof}

\subsection{Proofs of Section \ref{subsec:fragments}}
\begin{proposition}\label{prop:occseq}
\begin{itemize}
\item[(1)] If $(\rho,\sigma)$ is a lasso of $\N$ and $\sigma$ is
synchronized by $s$, then $(\epsilon, \sigma)$ is a lasso of $\N_s$.
\item[(2)] If $\sigma$ is an occurrence sequence of $\N_s$ leading to a marking $\vx$, 
then there is an occurrence sequence $\rho$ of $\N$ such that $\rho \sigma$ 
is an occurrence sequence of $\N$ leading to a marking $\vx'$ such that $\vx$ is the projection of $\vx' $ onto the parties of $s$ (remember that all parties of atoms of $\N_s$ are parties of $s$ because $s$ is a synchronizer.
\item[(3)]
If $\N$ is a SDN, then  $\N_s$ is a SDN.
\end{itemize}
\end{proposition}
\begin{proof}
Let $n'$ be an atom of $\N_n$. By definition, there is a lasso $(\rho,\sigma)$ of
$\N$, synchronized by $n$, such that $n'$ appears in $\sigma$. By Lemma \ref{lem:occseq}(1) 
$\sigma$ is an occurrence sequence of $\N_n$, and so $n'$ can occur in $\N_n$.

Assume now that $\sigma$ is an occurrence sequence of $\N_n$ that is not a large 
step of $\N_n$ and cannot be extended to a large step of $\N_n$. 
W.l.o.g. we can assume that $\sigma$ does not enable any atom of $\N_n$. 
Let $\vx$ be the marking reached by $\sigma$. By Lemma \ref{lem:occseq}(2) there is 
an occurrence sequence $\rho \sigma$ of $\N$. 
Let $\vx$ be the marking reached by this sequence. By Lemma \ref{lem:occseq}(2),
$\vx'$ is the projection of $\vx$ onto the set $P$ of parties of $n$.

By soundness there is a maximal
occurrence sequence $\rho\rho'$ such that $\rho'$ contains only atoms with parties
in $\agents\setminus P_n$. Clearly $\rho\rho'\sigma$ is an occurrence sequence of $\N$.
Let $\vx''$ be the marking reached by $\rho\rho'\sigma$. Clearly, we still have that
$\vx'$ is the projection of $\vx''$ onto $P$. We claim that $\vx''$ is a deadlock, 
contradicting the soundness of $\N$. To prove the claim, assume that $\vx''$ 
enables some atom $n'$ with set of parties $P'$. If $P' \subseteq P_t$, then 
$n'$ is also enabled at $\vx'$, contradicting that $\sigma$ does not enable any atom of $\N_n$. If $P' \subseteq \agents\setminus P$, then $\rho\rho'$ enables $n'$,
contradicting the maximality of $\rho\rho'$. Finally, if $P' \cap P \neq \emptyset P' \cap (\agents\setminus P)$, then there is an agent $a \in P$ such that $\vx''(a) \notin N_n$. But by definition of $\sigma$ we have $\vx'(a) \in N_n$, contradicting that $\vx'$ is the
projection onto $P$ of $\vx''$. \qed
\end{proof}

\begin{proposition}\label{prop:subsub}
Let $\N_{n_1}$ be the projection of $\N$ on ${\cal L}_{n_1}$, and let
$\N_{n_1n_2}$ is the projection of $\N_{n_1}$ on the set ${\cal L}_{n_2}$ of loops of
$\N_{n_1}$. Then $\N_{n_1n_2}$ and $\N_{n_2}$ are isomorphic.
\end{proposition}
\begin{proof}
It suffices to show that every loop of $\N_{n_1n_2}$ is contained
in some loop of $\N_{n_2}$ and viceversa.

Let $\sigma$ be a loop of $\N_{n_1n_2}$. Then, by definition, $\sigma$ is also
a loop of $\N_{n_2}$. Now, let $\sigma$ be a loop of $\N_{n_2}$
Since $n_2$ is an atom of $\N_{n_1}$, some loop of $\N_{n_1}$ 
has the form $\tau_1 (n_2,r) \tau_2$. But then 
$\tau_1 \sigma (n_2,r) \tau_2$ is also a loop of $\N_{n_1}$, and so 
$\sigma$ is a loop of $\N_{n_1n_2}$. \qed
\end{proof}

\begin{lemma}
A cyclic SDN contains an atom $n$ such that $\N_n$ is an acyclic SDN. 
\label{lem:soundsubneg}
\end{lemma}
\begin{proof}
Let $\N$ be a cyclic SDN. By Lemma \ref{lem:struct} $\N$ has a loop and hence also a minimal loop. By
Lemma \ref{lem:struct2} this loop has a synchronizer $n$,  and so  $\N_n$ is nonempty.
Choose $n$ so that $\N_n$ is nonempty, but its number of atoms is minimal. We claim that
$\N_n$ is acyclic. Assume the contrary. By Proposition \ref{prop:occseq}, $\N_n$ is
a SDN. By Lemmas \ref{lem:struct} and \ref{lem:struct2}, exactly as above, $\N_n$ contains an atom $n'$ such that $\N_{nn'}$ is nonempty. Clearly, 
$\N_{nn'}$ contains fewer atoms than $\N_n$ and by Proposition \ref{prop:subsub} it is isomorphic to $\N_{n'}$. 
This contradicts the minimality of $\N_n$. \qed
\end{proof}

\begin{lemma}
A cyclic SDN $\N$ contains an atom $n$ such that $\N_n$ is an acyclic SDN. 
\label{lem:soundsubneg}
\end{lemma}
\begin{proof}
Let $\N$ be a cyclic SDN. By Lemma \ref{lem:struct}(1) $\N$ has a loop and hence also a minimal loop. By
Lemma \ref{lem:struct2} this loop has at least one synchronizer. 
We choose a synchronizer $s$ such that $\N_s$ (which is nonempty for each synchronizer) is minimal.

First we argue that $\N_s$ is sound. 
Otherwise either some atom can never become enabled, which is impossible because $\N_s$ 
is built from loops synchronized by $s$, or a deadlock marking is reached after some occurrence 
sequence $\sigma$. At this marking, we necessarily have that
for each atom $n \in N_s$ some party of $n$ is ready to engage in some
different $n' \in N_s$, and by determinism only there 
(remember that all agents of $\N_s$ are parties of $s$).
In $\N$, starting with some reachable marking that enables $s$, the sequence $\sigma$ 
can occur, too. It leads to a marking with the same property: for each atom 
$n \in N_s$ some party of $n$ is ready to engage in a different atom of $\N_s$. 
But them, no matter which atoms outside $\N_s$ occur, the atoms of $\N_s$ 
can never become enabled again. In particular, since all agents of $\N_s$ are parties of $n_f$, 
we have that after $\sigma$ the final atom $n_f$ can never occur, 
contradicting soundness of $\N$.

Next we claim that $\N_s$ is acyclic. Assume the contrary. 
Since $\N_n$ is a SDN, we can argue as above and find 
a loop and an atom $n'$ such that $\N_{nn'}$ is nonempty. Clearly, 
$\N_{nn'}$ contains fewer atoms than $\N_n$.
Now each loop of $\N_s$ is also a loop of $\N$,
and $\N_{nn'}$ is equal to $\N_{n'}$. 
This contradicts the minimality of $\N_n$. \qed
\end{proof}

\subsection{Proofs of Section \ref{subsec:targets}}

We prove Lemma \ref{lem:exit}. Recall that, intuitively, the lemma states that
the occurrence of an exit $(e, r_e)$ of $\F_s$ forces all agents of $P_s$ to leave the fragment $\F_s$. In other words: all agents of $P_s$ are parties of $n$, and 
the occurrence of $(e,r_e)$ does not lead any agent back to an atom of $\F_s$.

\begin{lemma}
\label{lem:exit}
Let $\F_s$ be a fragment of a SDN $\N$, and let $(e, r_e)$ be an exit of $\F_s$. 
Then $P_e=P_s$ (i.e., $e$ has the same parties as $s$), and $\next(e,a,r_e) \notin F_s$ for every $a\in P_e$. 
\end{lemma}
\begin{proof}
We proceed indirectly and assume that either $P_e \subset P_s$ 
($P_e \subseteq P_s$ by the definition of fragment) or 
$\next(e,a,r_e) \in F_s$ for some $a\in P_e$. Then
at least one agent $h \in P_s$ satisfies 
either $h \notin P_e$ or $\next(e,h,r_e) \in F_s$. We call 
$h$ a {\em home agent} (intuitively, an agent that does not 
leave ``home'', i.e., $\F_s$,  by the occurrence of the exit). 
We show that the existence of $h$ leads to a contradiction.

We partition the set of $\agents$ of agents into internal agents, the agents
of $P_s$, and external agents, the agents of $\agents \setminus P_s$. 
We also partition the set of atoms: an atom is internal if it has only 
internal parties, otherwise it is external. Clearly all atoms of $F_s$ 
are internal, but there can also be internal atoms outside $F_s$. 
If $P_s$ contains all agents of the negotiation, then all agents are internal,
and so are all atoms (also the final atom $n_f$). 
Otherwise at least $n_f$ has an external party and is hence an external atom.

Next we define a function $p \colon N \to Out (\N)$ that assigns to each atom one of 
its outcomes (the {\em preferred} outcome). $p$ is defined for internal and external 
atoms separately, i.e., it is the union of functions $p_i$ 
assigning outcomes to internal atoms, and $p_e$ assigning outcomes to external atoms.

If there are external atoms, and hence $n_f$ is external, $p_e$ is defined as follows. 
First we set $p_e(n_f)$ to an arbitrary outcome of $n_f$. 
Then we proceed iteratively: If some external atom $n$ has an outcome $r$ and an external agent 
$a$ such that $p_e(\next(n,a,r))$ is defined, then set $p_e(n):=r$ (if there are several possibilities, 
we choose one of them arbitrarily).
At the end of the procedure $p_e$ is defined for every external atom, 
because each external atom $n$ has an external agent, say $a$, and, 
since $a$ participates in $n_f$,
the graph of $\N$ has a path of atoms, all of them with $a$ as party, 
leading from $n$
to $n_f$. 

Now we define $p_i$ for internal atoms. 
For the internal atoms $n$ not in $\F_s$ we define $p_i(n)$
arbitrarily. For the internal atoms $n \in F_s$ such that $\next(n,a,r)=s$ 
for some agent $a$ we set $p_i(n)=r$. For the rest of the internal atoms of $F_s$
we proceed iteratively. If $n \in F_s$ has an outcome $r$ and an agent $a$ 
(necessarily internal) such that $p_i(\next(n,a,r))$ is defined, 
then we set $p_i(n,a):=r$ (if there are several 
possibilities, we choose one of them). By Lemma \ref{lem:struct2}(2), the graph
of $\F_s$ is strongly connected, and so eventually $p_i$ is defined for 
all atoms of $\F_s$.

Let $\sigma$ be an arbitrary occurrence sequence leading to a marking $\vx_s$
that enables $s$ (remember that $\N$ is sound). By the definition
of $\F_s$, the marking $\vx_s$ enables an occurrence sequence $\sigma_e$ 
that starts with an occurrence of $s$, contains only atoms of
$F_s$, and ends with an occurrence of $(e, r_e)$, the considered 
exit of $\F_s$.

We now define a  {\em maximal} occurrence sequence $\tau$ enabled at $\vx_s$.
We start with $\tau:=\epsilon$ and while $\tau$ enables some atom
proceed iteratively as follows:
\begin{itemize}
\item If $\tau$ enables $\sigma_e$, then $\tau:=\tau\sigma_e$, i.e., we 
extend the current sequence with $\sigma_e$.
\item Otherwise, choose 
any enabled atom $n$, and set $\tau:= \tau (n,p(n))$, i.e., 
we extend the current sequence with  $(n,p(n)$.
\end{itemize} 

We first show that $\tau$ is infinite, i.e., that we never exit the while loop.
By soundness, there is always an 
enabled atom as long as the final marking is not reached, i.e., as long 
as at least one agent is ready to engage in an atom. 
So it suffices to show that this is the case. We prove that the home agent 
$h$ is ready to engage in an atom after the occurrence of an arbitrary finite prefix of $\tau$.
This result follows from the following claim.\\

\noindent
{\em Claim.}
If $\vx_s \by{\tau'} \vx'$ for some prefix $\tau'$  of $\tau$ then $\vx'(h) \in F_s$, i.e., 
the home agent $h$ only participates in atoms of the fragment and is always only ready
to participate in atoms of the fragment.\\

\noindent
{\em Proof of claim.}
The proof follows the iterative construction of $\tau$. We start at marking $\vx_s$,
and we have $\vx_s(h)=s$ because $h$ is a party of $s$ and $\vx_s$ enables $s$.

Whenever $\sigma_e$ or a prefix of $\sigma_e$ occurs, the property is preserved, 
because first, $\next (n,h,r) \in F_s$ holds for all outcomes $(n,r)$ of $\sigma_e$
except the last one (this holds for all parties of $n$); and second,
for the last outcome, which is $(e,r_e)$, $h$ is either not party of $e$ whence the marking
of $h$ does not change, or $\next (e, h, r_e) \in F_s$ by definition of $h$.

Whenever an outcome $(n, p(n))$ occurs, either $h$ is not a party of $n$, and then the marking of 
$h$ does not change, or $h$ is a party of $n$, and $n$ is an atom of $F_s$. 
By construction of $p$ (actually, of $p_i$), the property is preserved, which finishes the 
proof of the claim.\\

Let us now investigate the occurrences of external and internal atoms in $\tau$.
Let $G_E$ be the graph with the external atoms as nodes and an edge
from $n$ to $n'$ if $p_e(n) = n'$. By the definition of $p_e$, the graph $G_E$ is acyclic with
$n_f$ as sink. By the definition of $\tau$, after an external atom $n$ occurs in $\tau$, 
none of its predecessors in $G_E$ can occur in $\tau$. So $\tau$ contains only finitely many occurrences 
of external atoms. 

Since $\tau$ is infinite, it therefore has an infinite suffix $\tau'$ in which only internal
atoms occur. Since $s$ is a synchronizer with a minimal set of parties, every internal agent
participates in infinitely many outcomes of $\tau'$, in particular the home agent $h$. 
By the claim, $\tau'$ contains infinitely many occurrences of atoms of $\F_s$. 

Now let $G_s$ be the graph with the atoms of $F_s$ as nodes, and an edge  
from $n$ to $n'$ if $p_i(n) = n'$. By the definition of $p_i$, every cycle of the graph $G_s$ goes through
the synchronizer $s$. So $\tau'$ contains infinitely many occurrences of $s$.
Whenever $s$ is enabled, $\sigma_e$ is enabled, too, and actually occurs by the definition of $\tau$.
Since $\sigma_e$ ends with the outcome $(e, r_e)$,
 $\tau'$  also contains infinitely many occurrences of $(e, r_e)$. 
Since negotiations have finitely many reachable markings, $\tau'$ contains a loop synchronized by $s$
(by minimality of the synchronizer) and containing $(e,r_e)$. However, by the definition of a fragment
this implies that this loop and thus $(e,r_e)$ belongs to $\F_s$ as well, contradicting that $(e,r_e)$ is an
exit of $\F_s$.
\qed
\end{proof}

\begin{lemma}
\label{lem:targets}
For every $1 \leq i \leq |N_0|$: $\Ta(\N_{i}) \subseteq \Ta(N_0)$.
 \end{lemma}
\begin{proof}
It suffices to prove $\Ta(\N_{i+1}) \subseteq \Ta(\N_i)$ for $i < |N_0|$. Let $(n, r)$ be an arbitrary outcome of $\N_{i+1}$.
We show that there exists an outcome $(n',r')$ of $N_i$ such that $(n,r)$ and $(n',r')$ have the same targets. 


 If $(n,r)$ is also an outcome of $N_i$, 
then we are done. So assume this is not the case.
Then $(n, r)$ is generated by a particular application of the shortcut rule during the reduction
process leading from $\N_i$ to $\N_{i+1}$. Let $\N'$ and $\N''$ be the negotiations right
before and after this application of the rule. $\N'$ contains a fragment $\F_s'$
obtained by applying to $\F_s$ the same sequence of rules leading from $\N_i$ to
$\N'$. Similarly, $\N''$ contains a fragment $\F_s''$.

By the definition of the shortcut rule, $\N'$ has an outcome $(n_1, r_1)$ such that
$n_1 $ is an atom of $\F_s'$ and $(n_1, r_1)$ unconditionally enables another atom $n_2$ of $\F_s'$. Moreover,
 $(n, r)$ is the shortcut of $(n_1, r_1)$ and $(n_2, r_2)$, i.e., $(n,r)$ is 
obtained from clause (2) in Definition \ref{def:shortcutrule}. 

We prove the following three claims:

\begin{itemize}
\item[(1)] $(n_1, r_1)$ is an outcome of $\F_s'$, i.e.,  $\next(n_1, a, r_1) \in F_s'$
for every party $a$ of $n_1$. \\
Assume the contrary. Then, since $n_1 \in F_s'$, $(n_1, r_1)$ is an exit
of $\F_s'$, by Lemma \ref{lem:exit} we have $\next(n_1, a, r_1) \notin F_s'$ for every
party $a$ of $n_1$, contradicting that $(n_1, r_1)$ unconditionally enables an atom of $\F_s'$.
\item[(2)] $(n_2, r_2)$ is an exit of $\F_s'$. \\
Assume the contrary, i.e., $(n_2, r_2) \in \F_s'$.
By (1), both $(n_1, r_1)$ and $(n_2, r_2)$ are outcomes of $\F_s'$, 
and so $(n,r)$ is an outcome of $\F_s''$. But then, since $\F_s''$ is completely reduced
by the reductions leading from $\N''$ to $\N_{i+1}$, the outcome $(n, r)$ is removed by 
some rule in the reduction path between $\N''$ and $\N_{i+1}$, contradicting our
assumption that $(n, r)$ is an outcome of $\N_{i+1}$. 
\item[(3)] $(n,r)$ and $(n_2, r_2)$ have the same target.\\
By (2) and Lemma \ref{lem:exit}, $n_2$ has exactly the same parties as the synchronizer $s$. Since
$(n_1,r_1)$ unconditionally enables $n_2$, the same holds for $n_1$. So we have $P_{n_1} = P_{n_2} = P_s$
and $\next(n_1, a, r_1) = n_2$ for every $a \in P_{n_2}$. 
By the definition of the shortcut rule, $\next(n,a,r) =\next(n_2, a, r_2)$ for every $a \in P_{n_2}$, and we are done.
\end{itemize}

To finally prove that $(n,r)$ has the same target as some outcome of $\N_i$ 
we proceed by induction on the number $k$ of times the shortcut rule has
been applied between $\N_i$ and $\N'$. If $k=0$, then $(n_2, r_2)$ is an outcome of 
$\N_i$, and by (3) we are done. If $k > 0$, then either $(n_2, r_2)$ is an outcome of $\N_i$,
and by (3) we are done, or it is produced by a former application of the shortcut rule. In this case,
by induction hypothesis, $(n_2, r_2)$ has  the same target in $\N_i$ and therefore, by
(3), so has $(n,r)$. \qed
\end{proof}
\end{document}